\documentclass[journal]{IEEEtran}

\usepackage{xr}
\usepackage{forest}
\usepackage{signalpreamble}
\usepackage{amsthm}

\newtheorem{problem}{Problem}
\newtheorem{subproblem}{Problem}[Problem]

\myexternaldocument{supplementary}

\newcommand{\boldell}{{\bm{\ell}}}

\DeclareMathOperator{\VT}{VT}
\DeclareMathOperator{\SF}{SF}

\newacronym{GSP}{GSP}{graph signal processing}
\newacronym{GGSP}{GGSP}{generalized graph signal processing}
\newacronym{PSWF}{PSWF}{prolate spheroidal wave function}
\newacronym{GFT}{GFT}{graph Fourier transform}
\newacronym{FT}{FT}{Fourier transform}
\newacronym{WLOG}{WLOG}{without loss of generality}
\newacronym{JFT}{JFT}{joint Fourier transform}
\newacronym{STVFT}{STVFT}{short time-vertex Fourier transform}
\newacronym{STVWT}{STVWT}{spectral time-vertex wavelet transform}
\newacronym{NEGUP}{NEGUP}{Naive extension of graph uncertainty principle}

\begin{document}

\title{Uncertainty Principle for Vertex-Time Graph Signal Processing}
\author{Yanan Zhao, Xingchao Jian, Feng Ji, Wee~Peng~Tay,~\IEEEmembership{Senior~Member,~IEEE}, Antonio Ortega,~\IEEEmembership{Fellow,~IEEE}%
\thanks{Yanan Zhao, Xingchao Jian, Feng Ji, and Wee Peng Tay are with the School of Electrical and Electronic Engineering, Nanyang Technological University, Singapore.}
\thanks{Antonio Ortega is with University of Southern California, Los Angeles, USA.}
}

\markboth{submitted to IEEE TRANSACTIONS ON SIGNAL PROCESSING}%
{How to Use the IEEEtran \LaTeX \ Template}


\maketitle 



\begin{abstract}

We present an uncertainty principle for graph signals in the vertex-time domain, unifying the classical time-frequency and graph uncertainty principles within a single framework. By defining vertex-time and spectral-frequency spreads, we quantify signal localization across these domains. Our framework identifies a class of signals that achieve maximum concentration in both the spatial and temporal domains. These signals serve as fundamental atoms for a new vertex-time dictionary, enhancing signal reconstruction under practical constraints, such as intermittent data commonly encountered in sensor and social networks. Furthermore, we introduce a novel graph topology inference method leveraging the uncertainty principle. Numerical experiments on synthetic and real datasets validate the effectiveness of our approach, demonstrating improved reconstruction accuracy,  greater robustness to noise, and enhanced graph learning performance compared to existing methods.

\end{abstract}

\begin{IEEEkeywords}
Uncertainty principle, localized representation, vertex-time dictionary, graph signal reconstruction, graph topology inference
\end{IEEEkeywords}

\section{Introduction}

The uncertainty principle for signal processing, first developed by Slepian, Landau, and Pollack \cite{Slepian1983,Laudau1980}, describes the trade-off between a signal’s time and frequency localization. It states that no signal can be simultaneously arbitrarily well localized in both domains. Prolate spheroidal wave functions (PSWFs) \cite{Slepian1961,Pollak1961,Laudau1962,Slepian1978} embody this balance by maximizing a signal’s energy within a finite time interval while remaining bandlimited to a specified frequency range, making them an effective basis for representing or processing signals under simultaneous time-frequency constraints.

Similarly, uncertainty principles for \gls{GSP} have been developed in \cite{Ameya:J2013,Pasdeloup2015,Benedetto2015,Koprowski2016} and more recently in \cite{Oguzhan2017,Erb2021}, capturing the trade-off between localization in the vertex and (graph) spectral domains. 
Early formulations of graph uncertainty principles, such as \cite{Ameya:J2013}, define signal spread using distance-based metrics over the graph, measuring localization in the vertex and spectral domains. However, this approach depends on a specific choice of graph distance, such as the geodesic distance \cite{Royle2001}, resistance distance \cite{Klein1993}, or diffusion distance \cite{Coifman2006}, 
which introduces inherent limitations.
To address this, \cite{Tsitsvero2016} introduced a distance-independent definition of spread inspired by the work of Slepian, Landau, and Pollack \cite{Slepian1961,Pollak1961,Slepian1983} on prolate spheroidal wave functions. 
Their approach characterizes vertex and spectral spreads without relying on specific distances, defining a principle that measures energy concentration across vertex and spectral subsets. Notably, \cite{Tsitsvero2016} observes that, unlike the uncertainty principle for time-frequency localization, certain graph signals can be perfectly localized in both domains. By defining a feasible boundary for the vertex and spectral spreads, a class of signals maximally concentrated in both the vertex and spectral domains can be identified. 
Further extensions of the uncertainty principle have been made in the context of the graph linear canonical transform (GLCT) domain \cite{ZHANG2025}, leading to a sampling theory for the GLCT and the derivation of optimal sampling operators based on the uncertainty principle.

Traditional \gls{GSP} focuses mainly on analyzing information within the vertex domain. 
In some applications, however, each vertex is associated with a discrete or continuous \textbf{time} signal. To address such scenarios, the works \cite{Perraudin2015,Loukas2016,JiTay:J19,Zhang2021} have extended classical GSP to handle vertex-time signals. 
This includes tools such as the vertex-time Fourier transform, vertex-time filters, joint harmonic analysis, and vertex-time signal recovery. 
To enable localized representations in the \emph{vertex-time domain}, redundant vertex-time dictionaries \cite{Grassi2016,Grassi2018} have also been developed, including the \gls{STVFT} and the \gls{STVWT}.   

Despite these advances, existing methods struggle to handle sparse, irregular, or constrained observations, as commonly encountered in real-world scenarios such as sensor networks. 
Some specific challenges are: 
(i) \emph{temporal irregularity}, where sensors record at uneven or intermittent intervals due to limited battery, event-triggered sampling, or communication delays; (ii) \emph{spatial sparsity}, where only a subset of sensors report at a given time because others are shut down, fail, or undergo maintenance; 
(iii) \emph{joint spatio–temporal irregularity}, where the set of reporting sensors changes over time, leaving irregular coverage of the vertex–time domain. 
Conventional approaches like STVFT and STVWT rely on fixed-time windowing or wavelet bases that assume regularly spaced coverage of the vertex–time domain and thus lack the adaptability to capture signal energy concentrated over irregular or data-dependent subsets of vertices and time intervals. These limitations are especially problematic in sensor networks, where solar-powered devices may only operate during daylight or harsh environments may force intermittent operation, all of which reduce data availability and hinder reconstruction. 

In this work, we refer to the ``graph frequency'' domain as the \emph{spectral} domain, and the ``time frequency'' domain as simply the \emph{frequency} domain. 
Addressing the above challenges requires signal processing and representation methods that account for the localization trade-offs between the vertex-time and spectral-frequency domains. 
While uncertainty principles for the vertex domain versus the spectral domain in \gls{GSP} have been established, none address signals defined over irregular vertex–time domains. To close this gap, our objective is to establish an uncertainty principle that rigorously characterizes the localization trade-offs in the vertex–time domain versus the spectral-frequency domain, which in turn provides the theoretical foundation for designing energy-concentrated dictionaries and graph learning.

\begin{figure}[!htbp]
\centering
\vspace{-5mm}
\begin{subfigure}[b]{0.49\columnwidth}
\centering
\includegraphics[width=\columnwidth, trim={3.8cm 9cm 3.8cm 9cm}, clip]{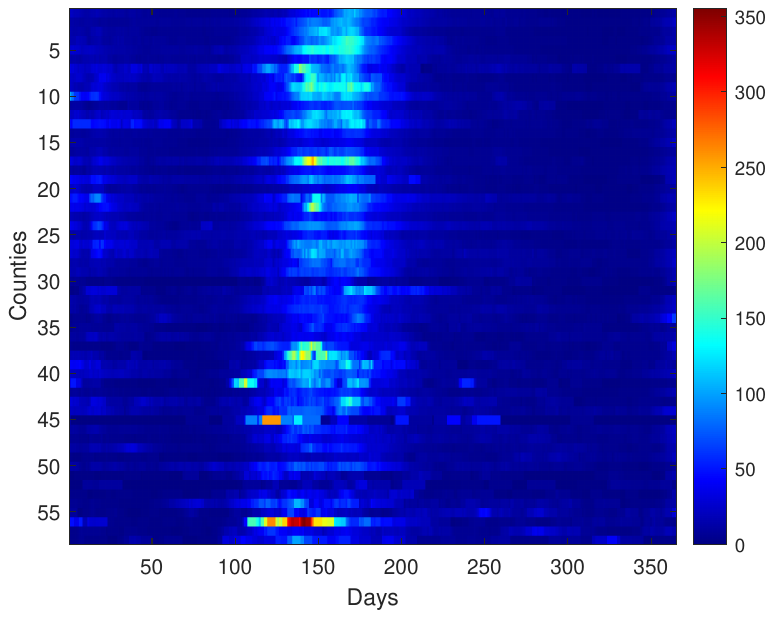}
\vspace{-6mm}
\caption{Jul 29, 2020--Jul 30, 2021}
\end{subfigure}
\hfill
\begin{subfigure}[b]{0.49\columnwidth}
\centering
\includegraphics[width=\columnwidth, trim={3.8cm 9cm 3.8cm 9cm}, clip]{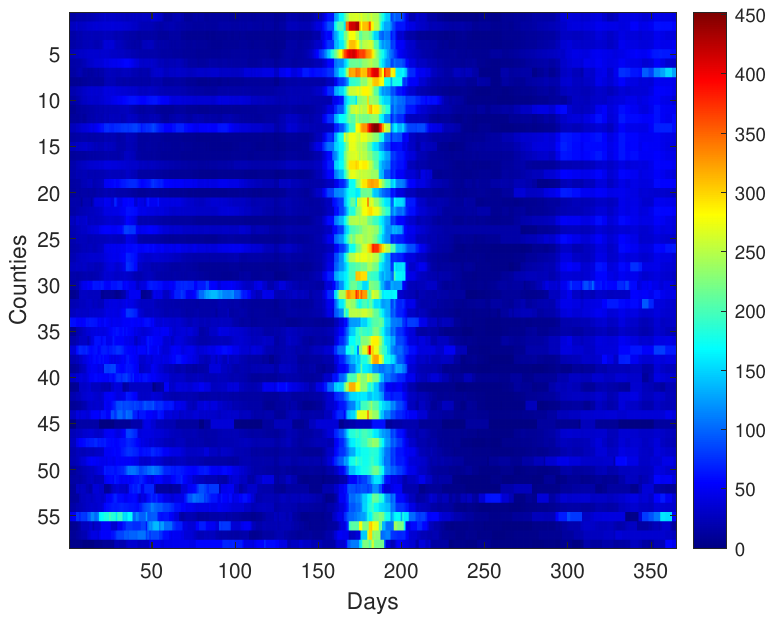}
\vspace{-6mm}
\caption{Jul 31, 2021--Aug 1, 2022}
\end{subfigure}
\vspace{-7mm}
\caption{The number of reported cases across counties and days from July 29, 2020, to August 1, 2022, for the Covid-19 dataset.}
\label{fig:heat_map_covid_19}
\end{figure}

To illustrate the need for energy-concentrated representations, consider the Covid-19 dataset\footnote{\url{https://github.com/TorchSpatiotemporal/tsl}}, which records reported cases across California’s 58 counties from July 29, 2020, to August 1, 2022. \Cref{fig:heat_map_covid_19} shows that case counts are highly localized, with energy concentrated in specific counties (e.g., 2, 5, 14, 15, 57) and during particular time intervals (e.g., days 100 to 200). In addition, the dataset contains days with zero reported cases in some counties, which often reflect delayed or missing reporting rather than a true absence of cases. These patterns demonstrate that the signal energy is not uniformly distributed in time or across vertices, but instead occupies a limited subset of the vertex-time domain. 
This highlights the importance of constructing vertex-time dictionaries that can efficiently capture such localized energy concentrations. 
However, existing approaches such as STVFT and STVWT rely on fixed bases that are not tailored to irregular or data-driven supports, limiting their ability to represent signals concentrated on specific subsets in the joint vertex-time domain. 
In this work, we address this limitation by first defining the uncertainty principle in the vertex-time domain and then leveraging it to design basis functions that maximize energy concentration over selected subsets of vertices and time intervals, enabling efficient signal representation and reconstruction under practical, constrained observation scenarios.

In this paper, we focus on energy-based localization over the vertex-time domain, while acknowledging the existence of alternative localization metrics \cite{Stankovic2017,Shuman2017,Sandryhaila2014}. 
Unlike prior works such as \cite{Perraudin2015,Loukas2016,Zhang2021} that operate in the discrete-time domain with signals defined at fixed time steps, our framework adopts the continuous-time domain introduced in \cite{JiTay:J19}. 
Discrete-time formulations face inherent limitations when node sampling is irregular. 
Using a common discrete sampling rate enforces temporal alignment but requires the rate to match the most densely sampled node, resulting in a high global rate and many missing entries for sparsely observed nodes. Conversely, assigning independent sampling rates to each node better reflects their observation patterns, but this approach leads to temporal misalignment and complicates cross-node comparisons.
A continuous-time formulation overcomes these issues by defining signals on a shared temporal axis and interpreting discrete observations as samples from an underlying continuous process. This provides a unified representation that naturally accommodates asynchronous and heterogeneous sampling without resampling, while maintaining consistent alignment across nodes and enabling a precise characterization of energy concentration over irregular subsets of the vertex–time domain. 
Building on this perspective, we develop an uncertainty principle to understand the trade-offs between signal localization in the vertex-time and spectral-frequency domains. This principle underpins two main applications: \emph{(i)} the design of basis functions that are maximally energy-concentrated over specific vertex–time subsets, and \emph{(ii)} a graph learning approach that maximizes spectral–frequency energy concentration while respecting vertex–spectral localization trade-offs, which existing methods \cite{stefania2019,dong2016} typically ignore. 

Instead of naively extending the vertex-domain framework in \cite{Tsitsvero2016}, which considers signal localization only in the vertex domain, by simply including the entire time interval, our new framework captures the localized behavior of signals across both vertices and time intervals by focusing on specific subspaces of functions defined on the joint vertex-time domain. 
This approach leads to a better understanding of how signals evolve over space and time, and to a more accurate and efficient representation of signals. 
The main contributions of this paper are summarized as follows:  
\begin{enumerate} [1)]
\item We formulate a \textit{generalized uncertainty principle in the vertex-time domain} that unifies and extends existing graph \cite{Tsitsvero2016} and continuous-time formulations \cite{Slepian1983} through a subspace-based localization measure, encompassing the classical subset-based notions \cite{Tsitsvero2016,Laudau1980} as special cases.
\item We \textit{identify a class of vertex-time graph signals that achieve maximal concentration in both the vertex-time and spectral-frequency domains}. These signals serve as fundamental atoms in a vertex-time dictionary for signal reconstruction. We derive their localization properties and outline dictionary construction and optimization for improved reconstruction accuracy.
\item We propose a \textit{new graph learning approach that maximizes energy concentration in the spectral-frequency domain}. This method captures the intrinsic signal structure while adhering to the vertex-spectral localization trade-off dictated by the uncertainty principle.
\item We \textit{evaluate our vertex-time dictionary learning and graph learning methods on both synthetic and real datasets}. Results demonstrate improved reconstruction accuracy, achieving an average gain of 4 dB, and superior noise robustness, maintaining stable performance even under noise levels as low as 0 dB, compared to existing dictionary learning methods. Additionally, our method improves graph inference precision by 4.7\% over existing graph learning approaches.
\end{enumerate}

This paper extends our earlier work in \cite{zhao2024UPICASSP}, where we first introduced a generalized uncertainty principle that unifies existing principles for graph and continuous-time signals. 
In this paper, we build upon that foundation by providing a detailed characterization of localization trade-offs in the vertex–time and spectral–frequency domains. Leveraging these insights, we propose a new energy-concentrated vertex–time dictionary for signal reconstruction and introduce a new graph learning approach that captures intrinsic signal structures while explicitly respecting the localization trade-offs dictated by the uncertainty principle. These extensions address the main limitations of our earlier work, namely the lack of a systematic study of concrete localization conditions (e.g., when perfect vertex and time localization is achievable) and the absence of concrete applications of the established principle.        

The remainder of this paper is organized as follows. \cref{sec.UP} develops the uncertainty principle for vertex-time graph signals. Compared to our earlier work \cite{zhao2024UPICASSP}, \cref{sec.joint_dic} expands the discussion by proposing a new vertex-time dictionary learning method based on this principle for signal reconstruction, while \cref{sec.gl} introduces a new graph topology inference method that was not included in \cite{zhao2024UPICASSP}. Numerical results are presented in \cref{sec.exp}, followed by concluding remarks in \cref{sec.conclu}.

\emph{Notations.} Scalars and scalar-valued functions are denoted by lowercase letters (e.g., $x$), vectors and vector-valued functions by bold lowercase (e.g., $\bx$), and matrices by bold uppercase (e.g., $\bA$). The sets of real, complex, and integer numbers are denoted by $\Real$, $\Complex$, and $\bbZ$, respectively. The space of square-integrable functions on a domain $\Omega$ is $L^{2}(\Omega)$, and $\|\cdot\|$ denotes its norm when unambiguous. The symbols $\|\cdot\|_1$, $\|\cdot\|_2$, and $\|\cdot\|_F$ denote the $L^1$, $L^2$, and Frobenius norms. The Kronecker delta is $\delta_{ij}$, where $i$ and $j$ denote integer indices. The tensor product is $\otimes$, $\diag(\bv)$ is the diagonal matrix with diagonal $\bv$, and transpose is denoted by $(\cdot)\T$.

\section{Uncertainty Principle for Vertex-time Graph Signals}
\label{sec.UP}

In this section, we establish an uncertainty principle for vertex-time graph signals. We adopt and extend the definition of spread introduced in \cite{Tsitsvero2016} to quantify the localization of vertex-time graph signals in the vertex-time and spectral-frequency domains.

\subsection{Fourier transforms}

Throughout this paper, we consider undirected graphs $\calG=(\calV,\bA)$ consisting of a set of $N$ vertices $\calV=\set{1,2,\dots,N}$, along with a set of weighted edges with adjacency weight matrix $\bA=[a_{ij}]_{i,j\in \mathcal{V}}$, such that $a_{ij}>0$, if vertex $i$ and $j$ are connected by an edge, or $a_{ij}=0$, otherwise. We assume that $\mathcal{G}$ is a connected weighted graph with no self-loops. 
The graph Laplacian matrix is given by $\bL=\bD-\bA$, where $\bD$ is the degree matrix. We take $\bL$ to be the graph shift operator (GSO). 
The \gls{GFT} of a graph signal $\bx\in\bbR^N$ \cite{ortega2022introgsp,Shuman2013,Pesenson2010,Sandryhaila2013,Pesenson2008,Rabbat2012} is given by
\begin{align}
\label{eq.graph_Fourier_transform}
\widehat{\bx}=\bU{\T}\bx
\end{align}
where $\bU$ is the unitary matrix whose columns are the Laplacian eigenvectors. 

A vertex-time graph signal $f$ is a map $\calV\times \calT \rightarrow \bbC$, defined by $(v,t)\mapsto f(v, t)$, where $\calV$ and $\calT=\Real$ represent the \emph{vertex} and \emph{time} domains, respectively \cite{JiTay:J19,Grassi2018,Loukas2016}. We denote the space of vertex-time graph signals as $L^{2}\left(\calV\times\calT\right)$. For $f\in L^2(\calV\times \calT)$, the joint Fourier transform (JFT) is defined as \cite{JiTay:J19}: 
\begin{align}
\label{eq.joint_Fourier_transform}
\mathcal{F}_{f}(\lambda_k,\omega)= \langle f, \bu_{k}\otimes e^{\iu\omega t}\rangle, \;\; \text{for}\;\;  k=1,\dots,N, 
\end{align}
where $\lambda_k$ and $\bu_{k}$ are the $k$-th eigenvalue and eigenvector of $\bL$, respectively, $\otimes$ is the tensor product, and $\iu=\sqrt{-1}$. 
The JFT maps the signal $f$ in $L^2(\calV\times\calT)$ to a signal in $L^2(\Lambda\times\Omega)$, where $\Lambda = \set{\lambda_1,\dots,\lambda_N}$ and $\Omega=\bbR$ are referred to as the \emph{spectral} and \emph{frequency} domains, respectively. The JFT generalizes both the \gls{GFT} and the classical \gls{FT}. 
Specifically, the JFT reduces to the \gls{GFT}, which maps a graph signal from $\calV$ to $\Lambda$, when $f$ is evaluated at a fixed time instance, and to the \gls{FT} mapping a temporal signal from $\calT$ to $\Omega$ when $\calV$ is a singleton. Readers are referred to \cite{JiTay:J19} for more details on \gls{GGSP}.

\subsection{Vertex-time and spectral-frequency spreads}

In this work, we adopt a general concept of localization for square-integrable vertex-time graph signals based on energy concentration over subspaces of $L^{2}(\calV\times \calT)$. This approach extends the framework introduced in \cite{Tsitsvero2016} for graph signals.

\begin{Definition}[Vertex-time projection]\label{def.vertex-time_projection}
Let $\calH_{\VT} \subseteq  L^{2}(\calV\times \calT)$ be a closed subspace. The vertex-time projection operation $\Pi_{\VT}$ is the orthogonal projection onto $\calH_{\VT}$, where $\Pi_{\VT} f \in \calH_{\VT}$ for any vertex-time graph signal $f\in L^{2}(\calV\times \calT)$. 
\end{Definition}

The operator $\Pi_{\VT}$ is Hermitian and idempotent, i.e., $(\Pi_{\VT})^2=\Pi_{\VT}$. The following are special cases following  \cref{def.vertex-time_projection}:
\begin{enumerate}[i)]
\item Define $\calH_{\VT}: = \set{ f\in L^{2}(\calV\times\calT)\given f(v,t)=0 ~\text{for}~(v,t) \notin \calS}$ for a vertex–time support subset $\calS \subseteq \calV \times \calT$. The projection reduces to the classical vertex-time limiting operator:
\begin{align}\label{eq.joint_VT_opt}
\Pi_{\VT} f = \Pi_{\calS} f:=
\begin{cases}
f(v,t) &\ \text{if}\ (v,t)\in~\calS,\\
0 &\ \text{otherwise}.
\end{cases}
\end{align}
Specific examples include:
\begin{enumerate}[label=\alph*)]
    \item \emph{Vertex-limiting projection:} When the signal is defined over the vertex set $\calV$ at a fixed time $t_{0}$, we have $f\in L^{2}(\calV \times \{t_{0}\})\cong L^{2}(\calV)$, which naturally reduces to a graph signal $f(v)=f(v,t_0) \in \Complex$ for each $v\in \calV$. In this case, a subset $\calS=\calV'\times \set{t_{0}} \subset \calV\times\calT$ corresponds to restricting the signal to a vertex subset $\calV'$, and the operator $\Pi_{\VT}$ reduces to the vertex-limiting operator $\Pi_{\calV'}$ \cite{Tsitsvero2016}.
    \item \emph{Time-limiting projection:} For a trivial graph $\calG$ with a single vertex $v_{0}$, $f\in L^{2}(\set{v_{0}}\times \calT)\cong  L^2(\calT)$ is a continuous-time signal on $\calT=\bbR$. Let $\calT' = [t_{c} - \ell/2, t_{c} + \ell/2]$ be an interval within $\calT$, where $t_{c}$ is the center time point and $\ell$ is the length of $\calT'$. In this case, $\calS=\set{v_{0}}\times \calT'$, and the operator $\Pi_{\VT}$ reduces to the time-limiting operator $\Pi_{\calT'}$ in \cite{Slepian1961}, which restricts the signal $f(t)=f(v_{0},t)$ to $\calT'$. 
\end{enumerate}

\item Define 
\[\small
\cal H_{\VT}
=\overline{\mathrm{span}}
\left\{
\begin{aligned}
& e^{-\tau_{v}\bL}\delta_{v_{0}}\otimes \frac{1}{\sqrt{4\pi\tau_{t}}}
  \exp\!\Big(-\frac{(t-t_{0})^{2}}{4\tau_{t}}\Big) \\
& \qquad :\; v_{0}\in \calV',\; t_{0}\in \calT'
\end{aligned}
\right\},
\]


where $e^{-\tau_{v}\bL}\delta_{v_{0}}$ is the graph heat atom centered at vertex $v_{0}$, and $\frac{1}{\sqrt{4\pi\tau_{t}}}e^{-\frac{(t-t_{0})^{2}}{4\tau_{t}}}$ is the Gaussian heat kernel in the time domain centered at time $t_{0}$. 
The parameters $\tau_{v}>0$ and $\tau_{t}>0$ control the diffusion scales in the vertex and time domains, respectively, and allow one to model signals that are localized around specific vertices and time instances (cf.\ \cref{sec.gaussian_subspace}).

\end{enumerate}

\begin{Definition}[Spectral-frequency projection]\label{def.joint_fre_opt_subspace}
Let $\calU: L^{2}(\calV\times \calT) \rightarrow L^{2}(\Lambda\times \Omega)$ be the JFT in \cref{eq.joint_Fourier_transform}, and let $\calH_{\SF}\subseteq L^{2}(\Lambda\times \Omega)$ be a closed subspace. The spectral-frequency projection operator $\Pi_{\SF}$, for each vertex-time graph signal $f\in L^2(\calV\times\calT)$, is defined as follows:
\begin{align}\label{joint_fre_opt_subspace}
\Pi_{\SF} f = \calU^{-1}\Sigma_{\SF}\{\calU f\}
\end{align}
where $\Sigma_{\SF}$ denotes the orthogonal projection onto $\calH_{\SF}$. 
\end{Definition}

The operator $\Pi_{\SF}$ is Hermitian and idempotent, i.e., $\Pi_{\SF}^2=\Pi_{\SF}$. In practice, one often takes $\calH_{\SF}$ to be the subspace of functions supported on a measurable spectral-frequency region $\Sigma\subseteq \Lambda\times\Omega$, 
namely
$\calH_{\SF}=\set*{ g\in L^{2}(\Lambda\times\Omega) \given g(\lambda,\omega)=0\ \text{for}\ (\lambda,\omega)\notin\Sigma }$ for some measurable subset $\Sigma \subseteq \Lambda\times\Omega$.
In this case, $\Pi_{\SF}$ is the spectral-frequency limiting operator,\footnote{We omit $\calU$ in the notation $\Pi_{\Sigma}$ to avoid clutter, which is clear from the context.} defined by
\begin{align}\label{joint_fre_opt}
    \Pi_{\SF}f = \Pi_{\Sigma} := \calU^{-1}\Sigma\{\calU f\},
\end{align}
where
\begin{align*}
\Sigma\{\calU f\} :=
\begin{cases}
(\calU f)(\lambda,\omega) & \text{if } (\lambda,\omega)\in\Sigma,\\
0 & \text{otherwise},
\end{cases}
\end{align*}
i.e., $\Pi_{\SF}$ zeroes the JFT outside $\Sigma$ to form a bandlimited signal.
The operator $\Pi_{\SF}$ recovers familiar special cases:
\begin{enumerate}[i)]
\item For signals $f\in L^{2}(\calV\times\{t_0\})$ with $f(v)=f(v,t_0)\in\Complex$ for $v\in\calV$, the transform $\calU$ can be the \gls{GFT} in \cref{eq.graph_Fourier_transform}. Then $\Pi_{\SF}$ becomes the spectral-limiting projector $\Pi_{\Lambda'}=\bU\Sigma_{\Lambda'}\bU^{\ast}$, where $\Lambda'\subset\Lambda$ and $\Sigma_{\Lambda'}=\diag(\mathbf{1}_{\Lambda'})$, projecting $f$ onto the subspace spanned by eigenvectors indexed by $\Lambda'$ \cite{Tsitsvero2016}.

\item For a trivial graph consisting of a single vertex, $\calU$ can be the \gls{FT} $\calF$ mapping $L^{2}(\calT)$ to $L^{2}(\Omega)$. For a frequency band $\Omega'=[\omega_c-W,\omega_c+W]$, where $\omega_c$ denotes the center frequency and $W$ the bandwidth, $\Pi_{\SF}$ reduces to the classical frequency-limiting operator $\Pi_{\Omega'}$ \cite{Slepian1961,Pollak1961}, which produces the bandlimited function
\begin{align*}
\Pi_{\Omega'}f(t)=\frac{1}{2\pi}\int_{\omega_c-W}^{\omega_c+W}\calF_f(\omega)\,e^{\iu\omega t}\ud\omega.
\end{align*}
\end{enumerate}



Let $\calH_{\VT}\subseteq L^{2}(\calV\times\calT)$ be a closed subspace, with  $\calH_{\VT}^{\perp}$ denoting its orthogonal complement, and let $\overline{\Pi}_{\VT}$ denote the orthogonal projection onto $\calH_{\VT}^{\perp}$. 
Then $\overline{\Pi}_{\VT}=\calI-\Pi_{\VT}$, where $\calI$ is the identity on $L^{2}(\calV\times\calT)$. In particular $\overline{\Pi}_{\VT}$ is self‑adjoint and idempotent, i.e., $\overline{\Pi}_{\VT}^{\ast}=\overline{\Pi}_{\VT}$ and $\overline{\Pi}_{\VT}^{2}=\overline{\Pi}_{\VT}$. The same relations hold for the spectral–frequency projector: $\overline{\Pi}_{\SF}=\calI-\Pi_{\SF}$ and $\Pi_{\SF}+\overline{\Pi}_{\SF}=\calI$.

\begin{Definition}[Vertex-time spread and spectral-frequency spread] 
Given the vertex-time projection operator $\Pi_{\VT}$ and the spectral-frequency projection operator $\Pi_{\SF}$, the vertex-time spread and spectral-frequency spread of a vertex-time graph signal $f\in L^{2}(\calV\times \calT)$ are defined with respect to these subspaces as follows: 
\begin{align}\label{eq.joint_spread_subspace}
\alpha^2_{\VT} 
= \frac{\norm{\Pi_{\VT} f}_{\calH_{\VT}}^{2}}{\norm{f}_{L^{2}(\cal{V}\times \cal{T})}^{2}}
~~\text{and}~~
\beta^2_{\SF} 
= \frac{\norm{\Pi_{\SF} f}_{\calH_{\SF}}^{2}}{\norm{f}_{L^{2}(\cal{V}\times \cal{T})}^{2}}.
\end{align}
\end{Definition}

These quantities measure the fractions of signal energy contained in the vertex–time and spectral–frequency subspaces, respectively.

\subsection{Uncertainty principle for vertex-time graph signals}

Given a closed vertex-time subspace $\calH_{\VT}$ and a spectral-frequency subspace $\calH_{\SF}$, the uncertainty principle for vertex-time graph signals establishes a trade-off between $\alpha_{\VT}^2$ and $\beta_{\SF}^2$, as stated in the following theorem.

\begin{Theorem}\label{UP_GGSP}
For a given $\calH_{\VT}$ and $\calH_{\SF}$, the feasible region $\Theta$ of achievable pairs $(\alpha_{\VT}, \beta_{\SF})$ is given by
\begin{align}
\label{eq.Theorem_feasible region}
\begin{aligned}
&\Theta = \Bigg\{ (\alpha_{\VT},\beta_{\SF}) :\\   
&\cos^{-1} \alpha_{\VT} + \cos^{-1} \beta_{\SF} \geq \cos^{-1}  \sqrt{\lambda_{\max}(\Pi_{\SF}\Pi_{\VT}\Pi_{\SF})},\\
&\cos^{-1}  \sqrt{1 - \alpha_{\VT}^2}+\cos^{-1} \beta_{\SF} \geq \cos^{-1}  \sqrt{\lambda_{\max}(\Pi_{\SF}\overline{\Pi}_{\VT}\Pi_{\SF})},\\
&\cos^{-1}  \alpha_{\VT} + \cos^{-1}  \sqrt{1 - \beta_{\SF}^2} \geq \cos^{-1}  \sqrt{\lambda_{\max}(\overline{\Pi}_{\SF}\Pi_{\VT}\overline{\Pi}_{\SF})}\\
&\cos^{-1}  \sqrt{1 - \alpha_{\VT}^2} + \cos^{-1}  \sqrt{1 - \beta_{\SF}^2} \\
& \qquad\qquad \qquad \geq \cos^{-1}  \sqrt{\lambda_{\max}(\overline{\Pi}_{\SF}\overline{\Pi}_{\VT}\overline{\Pi}_{\SF})}
\Bigg\}
\end{aligned}
\end{align}
where $\lambda_{\max}(\cdot)$ denotes the largest eigenvalue of its operator argument. I.e., there exists a $f\in L^{2}(\calV\times \calT)$ with $\norm{\Pi_{\VT} f}=\alpha_{\VT}$ and $\norm{\Pi_{\SF} f}=\beta_{\SF}$ if and only if $(\alpha_{\VT}, \beta_{\SF}) \in \Theta$.
\end{Theorem}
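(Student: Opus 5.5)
The plan follows the geometric argument of Tsitsvero et al.\ (and, originally, Landau–Pollak), now carried out in the Hilbert space $L^{2}(\calV\times\calT)$. Normalize $\norm{f}=1$ throughout. The key object is the angle between a unit vector and a closed subspace. For a closed subspace $\calH$ with projector $P$, the angle $\theta(f,\calH)$ is defined by $\cos\theta(f,\calH)=\norm{Pf}$. This angle satisfies the triangle inequality on the unit sphere:
\[
\theta(f,\calH_{1})+\theta(f,\calH_{2})\ \ge\ \theta(\calH_{1},\calH_{2}),
\]
where $\cos\theta(\calH_1,\calH_2)=\sup\{|\langle g,h\rangle| : g\in\calH_1,\ h\in\calH_2,\ \norm{g}=\norm{h}=1\}$. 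Write $P_1=\Pi_{\VT}$ and $P_2=\Pi_{\SF}$, noting that $\Pi_{\SF}$ is an orthogonal projector because $\calU$ is unitary (Plancherel for the JFT). Then
\[
\cos^{2}\theta(\calH_{1},\calH_{2})=\norm{P_1P_2}^{2}=\norm{P_2P_1P_2}=\lambda_{\max}(\Pi_{\SF}\Pi_{\VT}\Pi_{\SF}),
\]
interpreting $\lambda_{\max}$ as the top of the spectrum, i.e.\ the operator norm of a positive operator. Setting $\alpha_{\VT}=\cos\theta(f,\calH_{\VT})$ and $\beta_{\SF}=\cos\theta(f,\calH_{\SF})$ then gives the first inequality.

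\textbf{Necessity.} The other three inequalities come from applying the same argument to the pairs $(\calH_{\VT}^{\perp},\calH_{\SF})$, $(\calH_{\VT},\calH_{\SF}^{\perp})$ and $(\calH_{\VT}^{\perp},\calH_{\SF}^{\perp})$. This uses $\norm{\overline{\Pi}_{\VT}f}=\sqrt{1-\alpha_{\VT}^{2}}$ and $\norm{\overline{\Pi}_{\SF}f}=\sqrt{1-\beta_{\SF}^{2}}$, which follow from Pythagoras since $\Pi_{\VT}+\overline{\Pi}_{\VT}=\calI$. To prove the spherical triangle inequality itself, I take the unit vectors $g=P_1f/\norm{P_1f}$ and $h=P_2f/\norm{P_2f}$. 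The geodesic (arccos) distance on the unit sphere is a metric, so
\[
\angle(g,h)\le\angle(g,f)+\angle(f,h)=\theta(f,\calH_1)+\theta(f,\calH_2).
\]
Moreover $\cos\angle(g,h)\le\cos\theta(\calH_1,\calH_2)$. Degenerate cases where $P_if=0$ are handled directly, since then the angle is $\pi/2$.

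\textbf{Sufficiency.} This is the main obstacle. I would follow Landau–Pollak. Choose unit vectors $g_0\in\calH_{\VT}$ and $h_0\in\calH_{\SF}$ with $\langle g_0,h_0\rangle$ real and close to $\cos\theta(\calH_{\VT},\calH_{\SF})$.

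A caveat is that in infinite dimensions the supremum may not be attained. For example, when $\calH_{\VT}$ and $\calH_{\SF}$ are time- and band-limited spaces, $\Pi_{\SF}\Pi_{\VT}\Pi_{\SF}$ is compact, so its top eigenvalue is attained; in general it may not be. I would therefore first prove the result assuming attainment (an eigenvector of $P_2P_1P_2$ gives $h_0$, and $g_0=P_1h_0/\norm{P_1h_0}$). Then I would pass to the closure of the region by approximation, or state that $\Theta$ is understood up to its boundary.

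\textbf{Construction in the plane and on the boundary.} Within $\mathrm{span}\{g_0,h_0\}$, the unit vectors $f$ lying in the arc between $g_0$ and $h_0$ realize equality in the first inequality. For these vectors, $\theta(f,\calH_1)$ and $\theta(f,\calH_2)$ are exactly the angles to $g_0$ and $h_0$, which requires checking that $P_1f\parallel g_0$. That holds because $h_0-\langle h_0,g_0\rangle g_0\perp\calH_1$, using the eigenvector property. The analogous arcs for the three complementary pairs produce the other three boundary curves. Finally, I show the region between the curves is filled. One way is to rotate $f$ continuously, via combinations with vectors orthogonal to all four projected spaces (or via paths between boundary points); the map $f\mapsto(\norm{P_1f},\norm{P_2f})$ is continuous on the connected unit sphere, so its image is connected. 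The delicate part is showing that this image is exactly the region carved out by the four curves, and the four-corner case analysis of Tsitsvero et al.\ is where I expect the bulk of the work.
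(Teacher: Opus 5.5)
\textbf{Necessity and the boundary arcs.} Your necessity argument is correct, and it takes a different route from the paper. The paper, following Landau--Pollak and Tsitsvero et al., writes $f=\lambda\Pi_{\VT}f+\gamma\Pi_{\SF}f+g$, takes inner products to get a Gram system, eliminates the coefficients and completes the square, arriving at $(\beta_{\SF}-\alpha_{\VT}\cos\theta)^2\le(1-\alpha_{\VT}^2)\sin^2\theta$. You instead use the triangle inequality for the great-circle metric $\cos^{-1}\Re\angles{\cdot,\cdot}$, together with $\Re\angles{\Pi_{\VT}f/\norm{\Pi_{\VT}f},f}=\norm{\Pi_{\VT}f}$ and $\cos^2\theta(\calH_{\VT},\calH_{\SF})=\norm{\Pi_{\VT}\Pi_{\SF}}^2$. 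The paper's algebra is in effect a hands-on proof of that same triangle inequality. Your version is shorter, handles $\Pi_{\VT}f=0$ or $\Pi_{\SF}f=0$ directly, and carries over verbatim to the three complementary pairs. Your arc in $\mathrm{span}\{g_0,h_0\}$ is the paper's $f=p\psi_0+q\Pi_{\VT}\psi_0$. One slip: $\Pi_{\VT}f\parallel g_0$ holds for every $h_0$, by the definition of $g_0$. The eigenvector property is needed for the other fact, $\Pi_{\SF}f\parallel h_0$, via $\Pi_{\SF}g_0=\Pi_{\SF}\Pi_{\VT}\Pi_{\SF}h_0/\norm{\Pi_{\VT}h_0}=\sqrt{\lambda_{\max}}\,h_0$.

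\textbf{The gap: filling the interior.} The connectedness argument cannot fill the interior. A connected image that contains the four boundary arcs need not contain the region they enclose, and this actually happens for real scalars. Take $\calH_{\VT},\calH_{\SF}$ to be two lines in $\Real^2$ at angle $\theta$. Then $f=(\cos a,\sin a)\mapsto(\norm{\Pi_{\VT}f}^2,\norm{\Pi_{\SF}f}^2)=(\cos^2a,\cos^2(a-\theta))$ traces only an ellipse, whose four arcs between its tangency points with the sides of $[0,1]^2$ are exactly the four boundary curves of $\Theta$. Yet $\Theta$ is the filled ellipse. So this step must use complex scalars. (Also, no nonzero vector is orthogonal to both $\calH_{\VT}$ and $\calH_{\VT}^{\perp}$, so the proposed rotation is not available.)

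The missing idea is convexity.
\begin{itemize}
\item In the coordinates $(x,y)=(\alpha_{\VT}^2,\beta_{\SF}^2)=(\angles{\Pi_{\VT}f,f},\angles{\Pi_{\SF}f,f})$, the achievable set is the numerical range of $\Pi_{\VT}+\iu\,\Pi_{\SF}$. It is convex by the Toeplitz--Hausdorff theorem.
\item Each of the four constraints removes from $[0,1]^2$ a corner cap bounded by an arc of an ellipse of the form $(\cos^2a,\cos^2(a-\theta))$, tangent to the two adjacent sides. Hence each constraint set, and so $\Theta$, is convex.
\item The caps do not overlap; for example $\lambda_{\max}(\Pi_{\SF}\Pi_{\VT}\Pi_{\SF})+\lambda_{\max}(\Pi_{\SF}\overline{\Pi}_{\VT}\Pi_{\SF})\ge1$ when $\calH_{\SF}\neq\{0\}$. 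So $\Theta$ is the convex hull of the four arcs.
\item Your arc constructions plus convexity of the achievable set then give all of $\Theta$.
\end{itemize}
The paper does not prove this step either; it only asserts that interior points are ``combinations of singular vectors''.

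\textbf{Your attainment caveat is essential.} The ``if and only if'' is false as stated. In the single-vertex case with $\calT'$ and $\Omega'$ bounded, $\lambda_{\max}(\Pi_{\SF}\overline{\Pi}_{\VT}\Pi_{\SF})=1$: translate a bandlimited signal away from $\calT'$. This supremum is not attained, and $(\alpha_{\VT},\beta_{\SF})=(0,1)$ satisfies all four inequalities. But no nonzero bandlimited signal vanishes on $\calT'$, so the point is not achievable. Likewise, if $\calH_{\SF}=\{0\}$ then $\beta_{\SF}\equiv0$, while $\Theta$ has nonempty interior. What can actually be proved is one of two versions:
\begin{itemize}
\item the statement for nontrivial subspaces whose four extremal values are attained; or
\item $\Theta$ with the boundary points that would need a nonexistent extremizer removed, as Landau and Pollak do in the time--frequency case.
\end{itemize}
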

\begin{IEEEproof}
The proof follows the same steps as the corresponding proof in \cite{Tsitsvero2016} and is provided for completeness in the Appendices.
\end{IEEEproof}

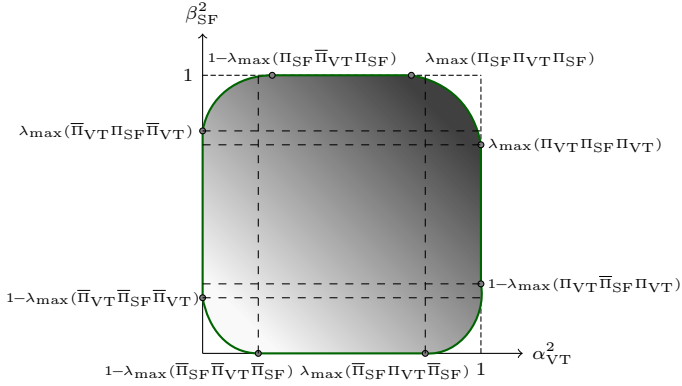
\begin{figure}
\centering
\begin{tikzpicture}[scale=3.7]

    \draw[dashed, dash pattern=on 2pt off 1pt] (0, 1) -- (1, 1);
    \draw[dashed, dash pattern=on 2pt off 1pt] (1, 0) -- (1, 1);

    \draw[->] (0, 0) -- (1.15, 0) node[right] {\scriptsize $\alpha_{\VT}^2$};
    \draw[->] (0, 0) -- (0, 1.15) node[above] {\scriptsize $\beta_{\SF}^2$};

    \shade[shading=axis, left color=gray!5, right color=gray!150, shading angle=135] 
    (0, 0.8) to[out=80, in=180] (0.25, 1) -- 
    (0.75, 1) to[out=-10, in=100] (1, 0.75) -- 
    (1, 0.25) to[out=-80, in=-5] (0.8, 0) -- 
    (0.2, 0) to[out=180, in=-80] (0, 0.2) -- 
    (0, 0.8);

    \draw[color={rgb,255:red,0; green,100; blue,0}, thick, smooth] (0, 0.8) to[out=80, in=180] (0.25, 1);
    \draw[color={rgb,255:red,0; green,100; blue,0}, thick, smooth] (0.75, 1) to[out=-10, in=100] (1, 0.75);
    \draw[color={rgb,255:red,0; green,100; blue,0}, thick, smooth] (0, 0.2) to[out=-80, in=180] (0.2, 0);
    \draw[color={rgb,255:red,0; green,100; blue,0}, thick, smooth] (0.8, 0) to[out=-5, in=-80] (1, 0.25);
    \draw[color={rgb,255:red,0; green,100; blue,0}, thick] (0.25, 1) -- (0.75, 1);
    \draw[color={rgb,255:red,0; green,100; blue,0}, thick] (0, 0.2) -- (0, 0.8);
    \draw[color={rgb,255:red,0; green,100; blue,0}, thick] (0.2, 0) -- (0.8, 0);
    \draw[color={rgb,255:red,0; green,100; blue,0}, thick] (1, 0.25) -- (1, 0.75);

    \filldraw[fill=gray, draw=black] (0, 0.8) circle (0.3pt) node[left,font=\tiny] {$\lambda_{\max} (\overline{\Pi}_{\VT} \Pi_{\SF} \overline{\Pi}_{\VT})$};
    \filldraw[fill=gray, draw=black] (0, 0.2) circle (0.3pt) node[left,xshift=0.25em,font=\tiny] {$1\!\!-\!\!\lambda_{\max} (\overline{\Pi}_{\VT} \overline{\Pi}_{\SF} \overline{\Pi}_{\VT})$};
    \filldraw[fill=gray, draw=black] (0.2, 0) circle (0.3pt) node[below left,xshift=1.7em,font=\tiny] {$1\!\!-\!\!\lambda_{\max} (\overline{\Pi}_{\SF} \overline{\Pi}_{\VT} \overline{\Pi}_{\SF})$};
    \filldraw[fill=gray, draw=black] (0.8, 0) circle (0.3pt) node[below left,xshift=2.05em,font=\tiny] {$\lambda_{\max} (\overline{\Pi}_{\SF} \Pi_{\VT} \overline{\Pi}_{\SF} )$};\
    \filldraw[fill=gray, draw=black] (0.25, 1) circle (0.3pt) node[above right,xshift=-2.8em,font=\tiny] {$1\!\!-\!\!\lambda_{\max} (\Pi_{\SF} \overline{\Pi}_{\VT} \Pi_{\SF})$};
    \filldraw[fill=gray, draw=black] (0.75, 1) circle (0.3pt) node[above right=-0.01em,xshift=0.15em, font=\tiny] {$\lambda_{\max} (\Pi_{\SF} \Pi_{\VT} \Pi_{\SF})$};
    \filldraw[fill=gray, draw=black] (1, 0.75) circle (0.3pt) node[right,xshift=-0.1em,font=\tiny] {$\lambda_{\max} (\Pi_{\VT} \Pi_{\SF} \Pi_{\VT} )$};
    \filldraw[fill=gray, draw=black] (1, 0.25) circle (0.3pt) node[right,xshift=-0.1em, font=\tiny] {$1\!\!-\!\!\lambda_{\max} (\Pi_{\VT} \overline{\Pi}_{\SF} \Pi_{\VT} )$};

    \draw[dashed] (0, 0.8) -- (1, 0.8);
    \draw[dashed] (0, 0.75) -- (1, 0.75);
    \draw[dashed] (0, 0.2) -- (1, 0.2);
    \draw[dashed] (0, 0.25) -- (1, 0.25);
    \draw[dashed] (0.2, 0) -- (0.2, 1);
    \draw[dashed] (0.8, 0) -- (0.8, 1);
    \node[below] at (1, 0) {\scriptsize $1$};
    \node[left] at (0, 1) {\scriptsize $1$};
\end{tikzpicture}
\caption{Feasible region $\Theta$ of $\alpha_{\VT}$ and $\beta_{\SF}$}
\label{fig:feasible region}
\end{figure}

\Cref{fig:feasible region} depicts the feasible region $\Theta$. The four boundary curves arise by turning the inequalities in \cref{eq.Theorem_feasible region} into equalities; these curves form the four corner arcs shown in the figure. In particular, the upper‑right boundary identifies pairs $(\alpha_{\VT},\beta_{\SF})$ that achieve the largest possible joint concentration in the vertex–time and spectral–frequency domains. 
This boundary is characterized by the equality
$\cos^{-1} \alpha_{\VT} + \cos^{-1} \beta_{\SF}
= \cos^{-1}\sqrt{\lambda_{\max}\parens*{\Pi_{\SF}\,\Pi_{\VT}\,\Pi_{\SF}}}$.

Typically, for a fixed spectral–frequency subspace $\calH_{\SF}$, enlarging the vertex–time subspace $\calH_{\VT}$ expands the feasible region and moves its upper‑right boundary toward the upper‑right corner. Here, enlarging a subspace is defined through strict inclusion, i.e., replacing $\calH_{\VT}$ with a larger subspace $\calH_{\VT}^{'}$ satisfying $\calH_{\VT}\subsetneq\calH_{\VT}^{'}$. The boundary curve collapses to that corner precisely when the projectors $\Pi_{\VT}$ and $\Pi_{\SF}$ admit perfect localization (see \cref{Sect.Localization properties}). 
Note that, under our subspace formulation, both $\calH_{\SF}$ and $\calH_{\VT}$ may have infinite dimension. More generally, any of the four boundary arcs in \cref{fig:feasible region} can collapse to its corresponding corner whenever the associated projector composition satisfies the perfect‑localization condition. Finally, it is instructive to compare the joint feasible region in \cref{eq.Theorem_feasible region} with the analogous regions that arise when only the vertex domain or only the time domain is considered.

\subsection{Special case of vertex-time and spectral-frequency subsets with product structure}

In the special case of \cref{eq.joint_VT_opt} and \cref{joint_fre_opt}, if the vertex-time subset $\calS$ has a product structure, then the vertex-time spread is bounded by the product of the vertex spread and time spread. A similar bound holds for the spectral-frequency spread.

\begin{Lemma}\label{Lemma.joint_bounds}
Given $\calS=\calV'\times \calT' \subseteq \calV\times \calT$ and $\Sigma = \Lambda'\times \Omega'\subseteq \Lambda\times \Omega$, let the vertex spread at a time $t$ and the time spread at a vertex $v$ be
\begin{align}
\label{eq.vertex_time_spread}
\alpha^{2}_{\calV'}(t) = \frac{\norm{f(\cdot, t)}_{L^{2}(\calV')}^{2}}{\norm{f(\cdot,t)}_{L^{2}(\calV)}^{2}}
~\text{and}~
\alpha^2_{\calT'}(v)= \frac{\norm{f(v,\cdot)}_{L^{2}(\calT')}^{2}}{\norm{f(v,\cdot)}_{L^{2}(\calT)}^{2}}
\end{align}
respectively.
Let the spectral spread at a (time) frequency $\omega$ and the frequency spread at a (graph) spectrum $\lambda$ be
\begin{align}
\label{eq.spec_fre_spread}
\beta^{2}_{\Lambda'}(\omega) = \frac{\norm{\calU f(\cdot,\omega)}_{L^{2}(\Lambda')}^{2}}{\norm{\calU f(\cdot,\omega)}_{L^{2}(\Lambda)}^{2}}
~\text{and}~
\beta^{2}_{\Omega'}(\lambda)=\frac{\norm{\calU f(\lambda,\cdot)}_{L^{2}(\Omega')}^{2}}{\norm{\calU f(\lambda,\cdot)}_{L^{2}(\Omega)}^{2}}
\end{align}
respectively.
The vertex-time spread $\alpha_{\calS}^2$ and spectral-frequency spread $\beta_{\Sigma}^2$ are bounded as follows:
\begin{align}
\min_{v}\alpha^{2}_{\calT'}(v)\inf_{t}\alpha^2_{\calV'}(t)\leq \alpha^2_{\calS} \leq \max_{v}\alpha^{2}_{\calT'}(v)\sup_t\alpha^2_{\calV'}(t) 
\label{eq.alpha_range}
\end{align}
and 
\begin{align}
\min_{\lambda} \beta^{2}_{\Omega'}(\lambda)\inf_{\omega}\beta_{\Lambda'}^{2}(\omega)\leq \beta^{2}_{\Sigma}\leq  \max_{\lambda} \beta^{2}_{\Omega'}(\lambda)\sup_{\omega}\beta_{\Lambda'}^{2}(\omega).
\label{eq.beta_sigma_range}
\end{align}
\end{Lemma}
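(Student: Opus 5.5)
Both bounds follow the same pattern: write the energy inside the product set as an iterated sum-integral, apply one of the two marginal spreads inside the integral, and then the other outside. For \cref{eq.alpha_range}, write $\alpha^2_\calS=\norm{\Pi_\calS f}^2/\norm{f}^2$ with
\begin{align*}
\norm{\Pi_\calS f}^2=\sum_{v\in\calV'}\int_{\calT'}|f(v,t)|^2\,\ud t
=\int_{\calT'}\norm{f(\cdot,t)}^2_{L^2(\calV')}\,\ud t .
\end{align*}
By \cref{eq.vertex_time_spread}, $\norm{f(\cdot,t)}^2_{L^2(\calV')}=\alpha^2_{\calV'}(t)\norm{f(\cdot,t)}^2_{L^2(\calV)}$. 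Bounding $\alpha^2_{\calV'}(t)$ between its infimum and supremum over $t$ gives
\begin{align*}
\inf_t\alpha^2_{\calV'}(t)\int_{\calT'}\sum_{v\in\calV}|f(v,t)|^2\,\ud t
\le \norm{\Pi_\calS f}^2 \le
\sup_t\alpha^2_{\calV'}(t)\int_{\calT'}\sum_{v\in\calV}|f(v,t)|^2\,\ud t .
\end{align*}

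Next, exchange sum and integral (Tonelli, finite sum) to write the remaining quantity as $\sum_{v\in\calV}\norm{f(v,\cdot)}^2_{L^2(\calT')}$. This equals $\sum_v\alpha^2_{\calT'}(v)\norm{f(v,\cdot)}^2_{L^2(\calT)}$, which lies between $\min_v\alpha^2_{\calT'}(v)\norm{f}^2$ and $\max_v\alpha^2_{\calT'}(v)\norm{f}^2$. Dividing by $\norm{f}^2$ yields \cref{eq.alpha_range}. The minimum and maximum over $v$ are attained because $\calV$ is finite.

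For \cref{eq.beta_sigma_range}, use that the JFT $\calU$ is unitary (Plancherel, with the normalization of \cite{JiTay:J19}). This gives $\norm{\Pi_\Sigma f}^2=\norm{\Sigma\{\calU f\}}^2_{L^2(\Lambda\times\Omega)}$ and $\norm{f}^2=\norm{\calU f}^2$. So $\beta^2_\Sigma=\sum_{\lambda\in\Lambda'}\int_{\Omega'}|\calU f(\lambda,\omega)|^2\,\ud\omega\big/\norm{\calU f}^2$. The same two-step argument, with $(\Lambda,\Omega,\Lambda',\Omega')$ in place of $(\calV,\calT,\calV',\calT')$ and the spreads \cref{eq.spec_fre_spread}, gives the result verbatim.

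The argument has two technical points. First, the ratios in \cref{eq.vertex_time_spread} and \cref{eq.spec_fre_spread} are undefined where the denominators vanish. I will handle this by restricting the inf/sup to times (resp.\ frequencies) where $f(\cdot,t)\neq 0$ (resp.\ $\calU f(\cdot,\omega)\neq0$), and the min/max to $v$ (resp.\ $\lambda$) with $f(v,\cdot)\neq0$; these sets carry zero energy and do not affect the integrals. Second, the continuous-time normalization of $\calU$ must be chosen so that the JFT is an isometry, possibly up to a constant factor. Such a factor cancels in each ratio, so the bounds are unaffected.
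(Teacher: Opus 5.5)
Your proof is correct and follows essentially the same route as the paper. The paper writes $\alpha^2_{\mathcal{S}}$ (and likewise $\beta^2_{\Sigma}$, computed directly on $\mathcal{U}f$) as a telescoping product of two ratios, which are weighted averages of $\alpha^2_{\mathcal{V}'}(t)$ over $t$ and of $\alpha^2_{\mathcal{T}'}(v)$ over $v$; this is exactly your two-step chain of inequalities, and your explicit handling of the JFT normalization and of vanishing denominators fills in points the paper leaves implicit.
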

\begin{IEEEproof}   
See \cref{sec.Lemma.joint_bounds}.
\end{IEEEproof}

We assume that the supremum and infimum in \cref{Lemma.joint_bounds} are achievable, and use the notations $\alpha_{\calT'}^{\text{min}}$, $\alpha^{\text{min}}_{\calV'}$, $\alpha_{\calT'}^{\text{max}}$, and $\alpha_{\calV'}^{\text{max}}$ as abbreviations for $\min_{v}\alpha_{\calT'}(v)$,  $\inf_t\alpha_{\calV'}(t)$, $\max_{v}\alpha_{\calT'}(v)$ and $\sup_t\alpha_{\calV'}(t)$, respectively. Similarly, we define $\beta_{\Lambda'}^{\text{min}}$, $\beta^{\text{min}}_{\Omega'}$, $\beta_{\Lambda'}^{\text{max}}$, and $\beta_{\Omega'}^{\text{max}}$ as $\inf_{\omega}\beta_{\Lambda'}(\omega)$, $\min_{\lambda} \beta_{\Omega'}(\lambda)$, $\sup_{\omega}\beta_{\Lambda'}(\omega)$, and $\max_{\lambda} \beta_{\Omega'}(\lambda)$, respectively.

\begin{Theorem}
\label{Theo.Feasible_Regoin_alpha}
Consider a special case $\Pi_{\VT}=\Pi_{\calS}$ and $\Pi_{\SF}=\Pi_{\Sigma}$ and such two subsets $\calS=\calV'\times \calT' \subseteq \calV\times \calT$ and $\Sigma = \Lambda'\times \Omega'\subseteq \Lambda\times \Omega$.  
Suppose $(\alpha_{\calS}, \beta_{\Sigma})\in\Theta$. Given the spectral spread $\beta^2_{\Lambda'}$ and the frequency spread $\beta^{2}_{\Omega'}$, defined in \cref{eq.spec_fre_spread}, the feasible region $\Theta_{\alpha_{\calS}}$ of $\alpha_{\calS}$ is given by
\begin{subequations}
\begin{align}
&\alpha_{\calS} \leq \cos\Big(\cos^{-1}\sqrt{\lambda_{\max}\left(\Pi_{\Lambda'}\Pi_{\calV'}\Pi_{\Lambda'}\right)\lambda_{\max}\left(\Pi_{\Omega'}\Pi_{\calT'}\Pi_{\Omega'}\right)}\notag\\
&\qquad\qquad\qquad - \cos^{-1}\left(\beta_{\Lambda'}^{\min}\beta_{\Omega'}^{\min}\right)\Big)\label{eq:ThetaS-1a}\\
&\alpha_{\calS} \leq \cos\Big(\cos^{-1}\sqrt{\lambda_{\max}\left(\overline{\Pi}_{\Lambda'}\Pi_{\calV'}\overline{\Pi}_{\Lambda'}\right)\lambda_{\max}\left(\overline{\Pi}_{\Omega'}\Pi_{\calT'}\overline{\Pi}_{\Omega'}\right)}\notag\\
&\qquad\qquad\qquad - \cos^{-1}\Big(\sqrt{1-(\beta_{\Lambda'}^{\max}\beta^{\max}_{\Omega'})^2}\Big)\Big),\\
& \alpha_{\calS} \geq \sin\Big(\cos^{-1}\sqrt{\lambda_{\max}\left(\Pi_{\Lambda'}\overline{\Pi}_{\calV'}\Pi_{\Lambda'}\right)\lambda_{\max}\left(\Pi_{\Omega'}\overline{\Pi}_{\calT'}\Pi_{\Omega'}\right)} \notag\\
&\qquad\qquad\qquad - \cos^{-1}\left(\beta_{\Lambda'}^{\min}\beta_{\Omega'}^{\min}\right)\Big),\\
& \alpha_{\calS} \geq \sin\Big(\cos^{-1}\sqrt{\lambda_{\max}\left(\overline{\Pi}_{\Lambda'}\overline{\Pi}_{\calV'}\overline{\Pi}_{\Lambda'}\right)\lambda_{\max}\left(\overline{\Pi}_{\Omega'}\overline{\Pi}_{\calT'}\overline{\Pi}_{\Omega'}\right)}\notag\\
&\qquad\qquad\qquad- \cos^{-1}\Big(\sqrt{1-\left(\beta_{\Lambda'}^{\max}\beta^{\max}_{\Omega'}\right)^2}\Big)\Big)
\end{align}
\end{subequations}
where $\lambda_{\max}(\cdot)$ denotes the largest eigenvalue of its operator argument.
\end{Theorem}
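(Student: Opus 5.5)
The plan is to combine \cref{UP_GGSP} with two ingredients: the tensor-product structure of the projectors in the product case, and the product bounds of \cref{Lemma.joint_bounds} for the spectral-frequency spread. Each of the four inequalities in \cref{eq.Theorem_feasible region} will be solved for $\alpha_{\calS}$, after which the right-hand sides are made explicit.

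\textbf{Step 1 (factorization).} Since $\calS=\calV'\times\calT'$, we have $\Pi_{\calS}=\Pi_{\calV'}\otimes\Pi_{\calT'}$ on $L^2(\calV\times\calT)\cong L^2(\calV)\otimes L^2(\calT)$. The JFT is $\calU=\bU\T\otimes\calF$, so with $\Sigma=\Lambda'\times\Omega'$ we also get $\Pi_{\Sigma}=\Pi_{\Lambda'}\otimes\Pi_{\Omega'}$. Therefore
\[
\Pi_{\Sigma}\Pi_{\calS}\Pi_{\Sigma}=(\Pi_{\Lambda'}\Pi_{\calV'}\Pi_{\Lambda'})\otimes(\Pi_{\Omega'}\Pi_{\calT'}\Pi_{\Omega'}).
\]
Both factors are positive semidefinite, and the spectrum of a tensor product of such operators is the closure of the set of products of their spectra. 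Hence $\lambda_{\max}$ factorizes as the product of the two $\lambda_{\max}$'s; in infinite dimensions $\lambda_{\max}$ is read as the operator norm or top of the spectrum.

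For the complementary terms I will use the analogous factored forms appearing in the statement. However, $\overline{\Pi}_{\calS}=\calI-\Pi_{\calV'}\otimes\Pi_{\calT'}$ is not itself a tensor product. This is where I expect the main obstacle. My first attempt is to note that $\overline{\Pi}_{\calS}$ dominates $\overline{\Pi}_{\calV'}\otimes\Pi_{\calT'}$, $\Pi_{\calV'}\otimes\overline{\Pi}_{\calT'}$, and similar pieces. That only yields inequalities between $\lambda_{\max}$ values, so I must check that each inequality points in the direction needed for the claimed bound. The cleanest route is to define the complement pieces exactly as the product-form operators in the statement, then verify that the proof of \cref{UP_GGSP} (the angle argument between subspaces) still goes through with these operators.

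\textbf{Step 2 (bounding $\beta_\Sigma$).} By \cref{eq.beta_sigma_range}, $\beta_{\Lambda'}^{\min}\beta_{\Omega'}^{\min}\le\beta_\Sigma\le\beta_{\Lambda'}^{\max}\beta_{\Omega'}^{\max}$. The map $\cos^{-1}$ is decreasing. So $\cos^{-1}\beta_\Sigma\le\cos^{-1}(\beta_{\Lambda'}^{\min}\beta_{\Omega'}^{\min})$ and $\cos^{-1}\sqrt{1-\beta_\Sigma^2}\le\cos^{-1}\sqrt{1-(\beta_{\Lambda'}^{\max}\beta_{\Omega'}^{\max})^2}$.

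\textbf{Step 3 (solving the inequalities).} Take the first inequality of \cref{UP_GGSP}:
\[
\cos^{-1}\alpha_{\calS}\ge\cos^{-1}\sqrt{\lambda_{\max}}-\cos^{-1}\beta_\Sigma\ge\cos^{-1}\sqrt{\lambda_{\max}}-\cos^{-1}(\beta^{\min}_{\Lambda'}\beta^{\min}_{\Omega'}).
\]
Apply $\cos$, which is decreasing on $[0,\pi]$, and substitute the factorized $\lambda_{\max}$; this gives \cref{eq:ThetaS-1a}. If the difference is negative, the bound is trivial after clipping to $[0,\pi]$, and I will remark on this. The third inequality of \cref{UP_GGSP}, with $\sqrt{1-\beta_\Sigma^2}$, gives the second bound in the same way.

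For the two lower bounds, use $\cos^{-1}\sqrt{1-\alpha^2}=\sin^{-1}\alpha$. Then
\[
\sin^{-1}\alpha_{\calS}\ge\cos^{-1}\sqrt{\lambda_{\max}(\cdots)}-\cos^{-1}(\cdot),
\]
and applying the increasing map $\sin$ on the relevant range yields the third and fourth bounds.

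The routine part is this monotonicity bookkeeping. The genuine difficulty is justifying the factorized $\lambda_{\max}$ for the operators involving complements, discussed in Step 1.
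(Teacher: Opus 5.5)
\paragraph{The first bound.} Your argument is the paper's (\cref{sec:proof_Th2_alpha}): the first inequality of \cref{UP_GGSP}, $\beta_{\Sigma}\ge\beta^{\min}_{\Lambda'}\beta^{\min}_{\Omega'}$ from \cref{Lemma.joint_bounds}, the factorization $\lambda_{\max}(\Pi_{\Sigma}\Pi_{\calS}\Pi_{\Sigma})=\lambda_{\max}(\Pi_{\Lambda'}\Pi_{\calV'}\Pi_{\Lambda'})\,\lambda_{\max}(\Pi_{\Omega'}\Pi_{\calT'}\Pi_{\Omega'})$, and monotonicity.

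Your clipping caveat is essential, not cosmetic. Let $x=\cos^{-1}\sqrt{\lambda_{\max}(\Pi_{\Sigma}\Pi_{\calS}\Pi_{\Sigma})}-\cos^{-1}(\beta^{\min}_{\Lambda'}\beta^{\min}_{\Omega'})$. If $x<0$, then $\cos x<1$ and the literal inequality $\alpha_{\calS}\le\cos x$ is false. For example, take any $f$ supported on $\calS$ with $\beta_{\Sigma}^2<\lambda_{\max}(\Pi_{\Sigma}\Pi_{\calS}\Pi_{\Sigma})$: there $\alpha_{\calS}=1$, while $\beta^{\min}_{\Lambda'}\beta^{\min}_{\Omega'}\le\beta_{\Sigma}$ forces $x<0$. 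So your argument proves the bound with $x$ replaced by $\max(x,0)$.

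\paragraph{The other three bounds.} The genuine gap is the step you flag there, and it cannot be closed.

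We have $\overline{\Pi}_{\Sigma}=\Pi_{\Lambda'}\otimes\overline{\Pi}_{\Omega'}+\overline{\Pi}_{\Lambda'}\otimes\Pi_{\Omega'}+\overline{\Pi}_{\Lambda'}\otimes\overline{\Pi}_{\Omega'}\succeq\overline{\Pi}_{\Lambda'}\otimes\overline{\Pi}_{\Omega'}$, and likewise $\overline{\Pi}_{\calS}\succeq\overline{\Pi}_{\calV'}\otimes\overline{\Pi}_{\calT'}$. Hence each true $\lambda_{\max}$ in the last three inequalities of \cref{UP_GGSP} is \emph{at least} the corresponding product in the statement. Substituting the product increases $\cos^{-1}\sqrt{\cdot}$ and so strengthens the claim; your domination check comes out the wrong way in all three cases.

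Re-running the angle argument with $P=\overline{\Pi}_{\Lambda'}\otimes\overline{\Pi}_{\Omega'}$ does not help either. It controls $\norm{Pf}$, but $\norm{Pf}\le\sqrt{1-\beta_{\Sigma}^2}$ again points the wrong way. The lower bound you can actually prove, $\norm{Pf}^2\ge(1-(\beta^{\max}_{\Lambda'})^2)(1-(\beta^{\max}_{\Omega'})^2)$ for unit $f$, yields a different and weaker bound.

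In fact, for bounded $\calT'$ and $\Omega'$ all three complement operators have top spectral value $1$. To see this, use time-limited functions modulated to high frequency, band-limited functions translated away from $\calT'$, or both. So the last three inequalities of \cref{UP_GGSP} are vacuous, and the stated bounds are false.

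For the second bound, take $\calV'=\{v\}$, $\lambda_1=0\in\Lambda'$, and $f=\delta_v\otimes h$ with $h(t)=\chi(t)e^{\iu\omega_0 t}$ and $\chi$ supported in $\calT'$. Then:
\begin{itemize}
\item $\alpha_{\calS}=1$;
\item $\lambda_{\max}(\overline{\Pi}_{\Lambda'}\Pi_{\calV'}\overline{\Pi}_{\Lambda'})=\norm{\overline{\Pi}_{\Lambda'}\delta_v}^2\le 1-1/N$;
\item $\beta_{\Omega'}(\lambda)=\norm{\Pi_{\Omega'}h}/\norm{h}\to0$ as $\omega_0\to\infty$, for every $\lambda$ with $\calU f(\lambda,\cdot)\ne0$.
\end{itemize}
So the right-hand side tends to at most $\sqrt{1-1/N}<1$, contradicting $\alpha_{\calS}=1$.

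The paper's proof has the same hole: it simply writes the complement eigenvalues in product form without justification. Only the clipped first bound follows. The other three hold only with the unfactored $\lambda_{\max}(\overline{\Pi}_{\Sigma}\Pi_{\calS}\overline{\Pi}_{\Sigma})$, $\lambda_{\max}(\Pi_{\Sigma}\overline{\Pi}_{\calS}\Pi_{\Sigma})$ and $\lambda_{\max}(\overline{\Pi}_{\Sigma}\overline{\Pi}_{\calS}\overline{\Pi}_{\Sigma})$, which makes them trivial in the bounded case.
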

\begin{IEEEproof}
See \cref{sec:proof_Th2_alpha}.
\end{IEEEproof}

\cref{Theo.Feasible_Regoin_alpha} generalizes \cite{Tsitsvero2016} by extending vertex-domain localization to vertex–time localization, and it bounds the vertex–time spread $\alpha_{\calS}$ in terms of the spectral spread
$\beta_{\Lambda'}$ and the frequency spread $\beta_{\Omega'}$. 
Although similar spread bounds can be formulated in a discrete-time setting, the continuous-time formulation offers greater flexibility: it permits localization over \emph{arbitrary measurable time intervals}, whereas discrete time only allows localization over subsets of a fixed sampling grid common to all nodes. 
For cases of interest, where we assume that observations at different nodes are not synchronized, creating a common sampling grid across nodes would require selecting arbitrarily small sampling intervals or introducing some temporal distortion if the timing of some observations does not match sampling times. 

In contrast, the continuous-time setting allows localization over arbitrary measurable intervals, yielding resolution-independent projection operators and enabling the rigorous derivation of the spread bound in \cref{eq:ThetaS-1a}.
In particular, the upper bound in \cref{eq:ThetaS-1a} provides a theoretical criterion for selecting the subset $\calS$ with maximal energy concentration. The selected subset $\mathcal{S} $ is then used to design energy-concentrated bases, which are applied to discrete observations for signal reconstruction. Moreover, by combining the range of the vertex-time spread $\alpha^2_{\calS}$, as given in \cref{eq.alpha_range}, we have
\begin{align}
\begin{aligned}\label{eq.beta_range_special_case}
\small
&\beta_{\Sigma} \leq \cos\Big(\cos^{-1}\sqrt{\lambda_{\max}\left(\Pi_{\Lambda'}\Pi_{\calV'}\Pi_{\Lambda'}\right)\lambda_{\max}\left(\Pi_{\Omega'}\Pi_{\calT'}\Pi_{\Omega'}\right)}\\
&\qquad\qquad-\cos^{-1}\left(\alpha_{\calV'}^{\min}\alpha_{\calT'}^{\min}\right)\Big).
\end{aligned}   
\end{align}
While time-frequency localization is well understood, vertex-spectral localization can exhibit different behaviors depending on the graph structure, which may affect the tightness of the bound; nevertheless, it still provides a principled and informative criterion for graph topology inference, allowing us to promote energy concentration in the spectral-frequency domain.

\section{Vertex-time graph signal reconstruction}\label{sec.joint_dic}

Based on the vertex-time uncertainty principle, we propose a dictionary of signals that exhibit maximal energy concentration in the vertex-time domain. We characterize the localization properties of these signals and establish conditions for perfect localization within prescribed closed subspaces defined on the vertex-time and spectral-frequency domains. We then present methods for constructing and optimizing these atoms.

\subsection{Localization properties}\label{Sect.Localization properties}

A vertex-time signal $f\in L^{2}(\calV\times\calT)$ is said to be perfectly localized in the closed vertex-time subspace $\calH_{\VT}$ if
\begin{align}
\Pi_{\VT} f = f,
\label{eq.vertex_time_limited_signals}
\end{align}
where $\Pi_{\VT}$ is the orthogonal projection onto $\calH_{\VT}$ (cf.\ \cref{def.vertex-time_projection}).  Similarly, $f$ is perfectly localized in the spectral–frequency subspace $\calH_{\SF}$ if
\begin{align}
\Pi_{\SF} f = f,
\label{eq.spec_fre_limited_signals}
\end{align}
with $\Pi_{\SF}$ the spectral–frequency projector from \cref{def.joint_fre_opt_subspace}.  In practice, $\calH_{\SF}$ is often a support‑limited subspace, i.e., functions whose joint spectral–frequency content is confined to a measurable region $\Sigma\subseteq\Lambda\times\Omega$; signals satisfying \eqref{eq.spec_fre_limited_signals} are also referred to as joint bandlimited signals.

Under our definition, a vertex-time graph signal can achieve perfect localization simultaneously in the \emph{vertex-time} and \emph{spectral-frequency} domains with suitable choices of $\calH_{\VT}$ and $\calH_{\SF}$. This dual localization property is formalized in the following theorem.

\begin{Theorem}
\label{Theo:localized_both_domains}
A vertex-time graph signal $f$ is perfectly localized over both the vertex-time subspace $\calH_{\VT}$ and spectral-frequency subspace $\calH_{\SF}$ (i.e., $f\in \ima(\Pi_{\VT}) \cap \ima(\Pi_{\SF})$) if and only if the Hermitian operator $\Pi_{\SF}\Pi_{\VT}\Pi_{\SF}$ (or equivalently, $\Pi_{\VT}\Pi_{\SF}\Pi_{\VT}$) satisfies
\begin{align}\label{per_local_condition}
\Pi_{\SF}\Pi_{\VT}\Pi_{\SF} f = f.
\end{align}
An equivalent condition is 
\begin{align}\label{eq.conditions_bi_localized}
\norm{\Pi_{\VT}\Pi_{\SF}}=\norm{\Pi_{\SF}\Pi_{\VT}}=1,
\end{align}
where $\norm{}$ is the operator norm corresponding to $L^2(\calV\times \calT)$.
\end{Theorem}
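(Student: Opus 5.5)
The plan is to treat $\Pi_{\VT}$ and $\Pi_{\SF}$ purely as orthogonal projections on the Hilbert space $L^2(\calV\times\calT)$. By \cref{def.joint_fre_opt_subspace}, $\Pi_{\SF}=\calU^{-1}\Sigma_{\SF}\calU$ with $\calU$ unitary, so it is itself an orthogonal projection. The whole argument then reduces to the elementary fact that, for an orthogonal projection $P$, $\norm{Pg}=\norm{g}$ holds if and only if $Pg=g$. This follows from $\norm{g}^2=\norm{Pg}^2+\norm{(\calI-P)g}^2$.

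\emph{Equivalence of $f\in\ima(\Pi_{\VT})\cap\ima(\Pi_{\SF})$ with \cref{per_local_condition}.}
\begin{itemize}
\item[] The forward direction is immediate. If $\Pi_{\VT}f=f$ and $\Pi_{\SF}f=f$, then $\Pi_{\SF}\Pi_{\VT}\Pi_{\SF}f=\Pi_{\SF}\Pi_{\VT}f=\Pi_{\SF}f=f$.
\item[] For the converse, assume $\Pi_{\SF}\Pi_{\VT}\Pi_{\SF}f=f$. Then
\[
\norm{f}^2=\langle \Pi_{\SF}\Pi_{\VT}\Pi_{\SF}f,f\rangle=\norm{\Pi_{\VT}\Pi_{\SF}f}^2\le\norm{\Pi_{\SF}f}^2\le\norm{f}^2 .
\]
So every inequality is an equality. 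Equality in the second one gives $\Pi_{\SF}f=f$. Substituting, equality in the first gives $\norm{\Pi_{\VT}f}=\norm{f}$, hence $\Pi_{\VT}f=f$.
\item[] Alternatively, $f\in\ima(\Pi_{\SF}\Pi_{\VT}\Pi_{\SF})\subseteq\ima(\Pi_{\SF})$ gives $\Pi_{\SF}f=f$ directly, and then $\langle\Pi_{\VT}f,f\rangle=\norm{f}^2$ finishes.
\item[] The symmetric argument handles $\Pi_{\VT}\Pi_{\SF}\Pi_{\VT}$, which justifies the parenthetical ``equivalently'' in the statement.
\end{itemize}

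\emph{Equivalence with \cref{eq.conditions_bi_localized}.}
\begin{itemize}
\item[] First, $\norm{\Pi_{\SF}\Pi_{\VT}}=\norm{(\Pi_{\VT}\Pi_{\SF})^\ast}=\norm{\Pi_{\VT}\Pi_{\SF}}$. Also, by the C$^*$-identity, $\norm{\Pi_{\VT}\Pi_{\SF}}^2=\norm{\Pi_{\SF}\Pi_{\VT}\Pi_{\SF}}=\lambda_{\max}(\Pi_{\SF}\Pi_{\VT}\Pi_{\SF})$, since this operator is positive.
\item[] If a nonzero $f$ satisfies \cref{per_local_condition}, then $\norm{\Pi_{\VT}\Pi_{\SF}f}=\norm{f}$. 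Combined with the bound $\norm{\Pi_{\VT}\Pi_{\SF}}\le 1$, this gives norm exactly $1$.
\item[] The converse is the main obstacle. In infinite dimensions, norm $1$ only means the supremum of the spectrum is $1$; it need not be an eigenvalue. The classical example is time-limiting and band-limiting on $\Real$, where $\norm{\Pi_{\calT'}\Pi_{\Omega'}}<1$, but other subspace pairs can have norm $1$ without being attained.
\item[] I would handle this in one of two ways. One option is to add an attainment hypothesis matching the standing assumption made after \cref{Lemma.joint_bounds} that suprema are achieved: namely, that $\lambda_{\max}$ in \cref{UP_GGSP} is a genuine eigenvalue, as when one of the projections has finite rank or $\Pi_{\SF}\Pi_{\VT}\Pi_{\SF}$ is compact. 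The other option is to take a maximizing sequence and use compactness to extract an eigenvector.
\item[] Once a unit $f$ with $\norm{\Pi_{\VT}\Pi_{\SF}f}=1$ exists, the chain of inequalities above forces $\Pi_{\SF}f=f$ and $\Pi_{\VT}f=f$. This yields a nonzero jointly localized signal, i.e.\ a nontrivial $\ima(\Pi_{\VT})\cap\ima(\Pi_{\SF})$.
\item[] So \cref{eq.conditions_bi_localized} should be read as the existence condition for such signals: the eigenspace of $\Pi_{\SF}\Pi_{\VT}\Pi_{\SF}$ at eigenvalue $1$ is nontrivial. I would state the result in this form, with the caveat noted.
\end{itemize}
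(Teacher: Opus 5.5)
Your proof that $f\in\ima(\Pi_{\VT})\cap\ima(\Pi_{\SF})$ is equivalent to $\Pi_{\SF}\Pi_{\VT}\Pi_{\SF}f=f$ is correct and is essentially the paper's argument. The paper gets $\Pi_{\SF}f=f$ by applying $\Pi_{\SF}$ to the eigen-equation, which is your ``alternative'' route. It then uses a Rayleigh--Ritz step that amounts to your observation that $\langle \Pi_{\VT}f,f\rangle=\norm{f}^2$ forces $\Pi_{\VT}f=f$. Your chain $\norm{f}^2=\norm{\Pi_{\VT}\Pi_{\SF}f}^2\le\norm{\Pi_{\SF}f}^2\le\norm{f}^2$ states this more carefully than the paper's ``$\max_f$'' formulation, which conflates a fixed $f$ with a supremum.

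On the norm condition you go beyond the paper, which gives no argument for it. Your caveat is correct: in general only the implication from a nonzero jointly localized signal to $\norm{\Pi_{\VT}\Pi_{\SF}}=1$ holds. Three refinements follow.

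\begin{itemize}
\item The bounded time/band-limiting pair you mention does not exhibit the failure. There $\Pi_{\Omega'}\Pi_{\calT'}\Pi_{\Omega'}$ is trace class, so its norm is attained and is $<1$.
\item A genuine counterexample inside the paper's own framework is a single vertex with $\calT'=\Omega'=[0,\infty)$. The unit vectors $g_n(t)=\pi^{-1/4}e^{-(t-n)^2/2}e^{\iu nt}$ have energy on $t<0$ and on $\omega<0$ tending to zero as $n\to\infty$, so $\norm{\Pi_{\calT'}\Pi_{\Omega'}}=1$. Yet a nonzero $f$ supported in $[0,\infty)$ with spectrum in $[0,\infty)$ would be the boundary value of a Hardy-space function vanishing on a half-line, which forces $f=0$.
\item The same example shows your ``maximizing sequence'' option is not an independent route. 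Here $g_n\rightharpoonup 0$, so extracting an eigenvector needs compactness of $\Pi_{\SF}\Pi_{\VT}\Pi_{\SF}$ (e.g., one projection of finite rank). That is just the attainment hypothesis again.
\end{itemize}

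The paper's minimum-angle proposition in the appendix already tacitly assumes that $\lambda_{\max}(\Pi_{\SF}\Pi_{\VT}\Pi_{\SF})$ is attained by an eigenvector $\psi_0$. So stating the norm equivalence under that hypothesis, as you propose, is the right fix.
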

\begin{IEEEproof}
See \cref{sec:proof_the3}.
\end{IEEEproof}

To illustrate \cref{Theo:localized_both_domains}, consider the following special cases:
\begin{enumerate}[i)]
\item In the case where $f\in L^2(\calV)$ (traditional GSP), a perfectly localized $f$ is an eigenvector of the operator $\Pi_{\Lambda'}\Pi_{\calV'}\Pi_{\calW'}$ associated with a unit eigenvalue, aligning with \cite[Theorem 2.1]{Tsitsvero2016}.

\item When the graph $\calG$ consists of a single vertex, the limiting operators $\Pi_{\SF}$ and $\Pi_{\VT}$ reduce to the frequency-limiting operator $\Pi_{\Omega'}$ and the time-limiting operator $\Pi_{\calT'}$. If $\calT'\subset\Real$ and $\Omega'= [-W,W] \subset \Real$ are both bounded sets, the condition in \cref{per_local_condition} no longer holds because the largest eigenvalue $\lambda_{\max}(\Pi_{\Omega'}\Pi_{\calT'}\Pi_{\Omega'})$ is strictly less than $1$ (see \cite{Pollak1961}). This result illustrates the classical time-frequency uncertainty principle: no nonzero signal can simultaneously have both a finite time duration and a finite frequency bandwidth.

\item Consider two subsets $\calS=\calV' \times \calT' \subseteq \calV \times \calT$ and $\Sigma = \Lambda' \times \Omega'\subseteq \Lambda \times \Omega$, where $\calT'\subset\Real$ and $\Omega'\subset \Real$ are both bounded sets. The vertex-time limiting operator $\Pi_{\calS}$ and the spectral-frequency limiting operator $\Pi_{\Sigma}$ decompose as $\Pi_{\calS}=\Pi_{\calV'}\otimes\Pi_{\calT'}$ and $\Pi_{\Sigma}=\Pi_{\Lambda'}\otimes\Pi_{\Omega'}$, respectively. Then $\Pi_{\Sigma}\Pi_{\calS}\Pi_{\Sigma}=\left(\Pi_{\Lambda'}\otimes\Pi_{\Omega'}\right)\left(\Pi_{\calV'}\otimes\Pi_{\calT'}\right)\left(\Pi_{\Lambda'}\otimes\Pi_{\Omega'}\right)$ and its largest eigenvalue factors as 
\begin{align*}
    \lambda_{\max}(\Pi_{\Sigma}\Pi_{\calS}\Pi_{\Sigma})=\lambda_{\max}(\Pi_{\Lambda'}\Pi_{\calV'}\Pi_{\Lambda'})\lambda_{\max}(\Pi_{\Omega'}\Pi_{\calT'}\Pi_{\Omega'}),
\end{align*}
which is less than $1$ as $\lambda_{\max}(\Pi_{\Omega'}\Pi_{\calT'}\Pi_{\Omega'})<1$. 
Consequently, a vertex-time graph signal $f$ cannot be perfectly localized over both the vertex-time subset $\calS$ and the spectral-frequency subset $\Sigma$ simultaneously.
\end{enumerate}


When perfect localization in both $ \calH_{\VT} $ and $ \calH_{\SF} $ is not achievable, we seek signals that are perfectly localized in one domain while maximizing energy concentration in the other. A common instance is to construct an orthonormal family of bandlimited functions (i.e., satisfying $ \Pi_{\Sigma} f = f$ in \cref{joint_fre_opt}) that are as concentrated as possible in the vertex–time subspace $ \calH_{\VT} $. This is obtained by solving the following sequence of constrained optimization problems for $ \{\xi_i\}_{i\ge 1} $:
\begin{align}\label{eq.optimization_problem}
\begin{aligned}
\xi_{i} = &\argmax_{\xi_{i}}\norm{\Pi_{\VT} \xi_{i}}\\
\ST & \norm{\xi_{i}} = 1,\  
\Pi_{\Sigma}\xi_{i}=\xi_{i},\\ 
&\angles{\xi_{i},\xi_{j}}=0,\; j = 1,\dots,i-1,\; i>1.  
\end{aligned}
\end{align}
Accordingly, $ \xi_1 $ is the bandlimited function with the largest $ \calH_{\VT} $ energy, $ \xi_2 $ is the next bandlimited function orthogonal to $ \xi_1 $ with maximal $ \calH_{\VT} $ energy, and so on. The next theorem characterizes the solution to \cref{eq.optimization_problem}.

\begin{Theorem}\label{Theo.solu_optimization}
A set of orthonormal joint bandlimited signals $\set{\xi_{i}}_{i\geq 1}$, each of which is maximally concentrated over a vertex-time subspace $\calH_{\VT}$, consists of the eigenvectors of the operator $\Pi_{\Sigma}\Pi_{\VT}\Pi_{\Sigma}$, i.e.,
\begin{align}
\Pi_{\Sigma}\Pi_{\VT}\Pi_{\Sigma} \xi_{i} = \lambda_{i} \xi_{i},
\end{align}
where the eigenvalues $\lambda_{1}\geq\lambda_{2}\geq\cdots$ form a non-increasing sequence whose only possible limit is zero. Additionally, for all $i, j\geq 1$,
\begin{align}
\angles{\xi_{i},\Pi_{\VT}\xi_{j}} = \lambda_{j}\delta_{ij},
\end{align}
where $\delta_{ij}$ is the Kronecker symbol.
\end{Theorem}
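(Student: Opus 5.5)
The plan is to reduce the constrained problem \cref{eq.optimization_problem} to a Rayleigh-quotient (Courant--Fischer) problem for the self-adjoint operator $\bT:=\Pi_{\Sigma}\Pi_{\VT}\Pi_{\Sigma}$.

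\emph{Reduction.} For any $\xi$ with $\Pi_{\Sigma}\xi=\xi$, idempotence and self-adjointness of $\Pi_{\VT}$ give
\[
\norm{\Pi_{\VT}\xi}^2=\langle \Pi_{\VT}\xi,\xi\rangle=\langle \Pi_{\VT}\Pi_{\Sigma}\xi,\Pi_{\Sigma}\xi\rangle=\langle \bT\xi,\xi\rangle .
\]
So \cref{eq.optimization_problem} is the successive maximization of $\langle\bT\xi,\xi\rangle$ over unit vectors in $\ima(\Pi_{\Sigma})$, orthogonal to the previously chosen $\xi_j$. Moreover, $\bT$ vanishes on $\ima(\Pi_\Sigma)^\perp$ and maps $\ima(\Pi_\Sigma)$ into itself. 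Hence its restriction to $\ima(\Pi_\Sigma)$ is a bounded, self-adjoint, positive semidefinite operator with norm at most $1$.

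\emph{Compactness.} The spectral conclusion (a non-increasing eigenvalue sequence accumulating only at zero) needs $\bT$ to be compact. This is the step I expect to be the main obstacle, since the statement places no restriction on $\calH_{\VT}$ beyond closedness. I would argue it first for the motivating case $\Pi_{\VT}=\Pi_{\calS}$ with $\calS$ of finite measure (bounded time support) and $\Sigma$ of finite measure. In that case $\Pi_{\calS}\Pi_{\Sigma}$ is an integral operator on $\calV\times\calT$. Its kernel is
\[
\mathbf{1}_{\calS}(v,t)\sum_{k:\lambda_k}\bu_k(v)\bu_k(v')\frac{1}{2\pi}\int_{\Omega_{\lambda_k}}e^{\iu\omega(t-t')}\,d\omega ,
\]
where $\Omega_{\lambda}$ is the $\lambda$-section of $\Sigma$. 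This kernel is Hilbert--Schmidt, because its squared $L^2$ norm is at most $|\calS|\,|\Sigma|/(2\pi)$ times a constant from the orthonormality of $\bU$. Hence $\bT=(\Pi_{\calS}\Pi_\Sigma)^{\ast}(\Pi_{\calS}\Pi_\Sigma)$ is compact, and in fact trace class. For a general closed $\calH_{\VT}$ I would state compactness of $\Pi_\Sigma\Pi_{\VT}\Pi_\Sigma$ as the standing hypothesis needed for the statement. It holds, for example, whenever $\Sigma$ has finite measure and $\calH_{\VT}$ is contained in the range of some $\Pi_{\calS}$ with $|\calS|<\infty$, since then $\Pi_{\VT}=\Pi_{\VT}\Pi_{\calS}$ and $\bT=\Pi_\Sigma\Pi_{\VT}\Pi_{\calS}\Pi_\Sigma$, which is a bounded operator times a compact one. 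Without such an assumption one only gets a spectral measure and not a sequence.

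\emph{Spectral theorem and induction.} Given compactness, the spectral theorem for compact self-adjoint operators provides an orthonormal system of eigenvectors spanning $\overline{\ima(\bT)}$. The eigenvalues $\lambda_1\ge\lambda_2\ge\cdots\ge0$ are repeated by multiplicity, and $\lambda_i\to0$ if there are infinitely many. The proof then proceeds by induction on $i$ via the max-min principle. Suppose $\xi_1,\dots,\xi_{i-1}$ are eigenvectors. Then their orthogonal complement $W$ in $\ima(\Pi_\Sigma)$ is $\bT$-invariant, and $\bT|_W$ is compact and self-adjoint. Therefore $\sup_{\xi\in W,\norm{\xi}=1}\langle\bT\xi,\xi\rangle=\norm{\bT|_W}$ is attained at an eigenvector with eigenvalue $\lambda_i$, and every maximizer is such an eigenvector. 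If the supremum is $0$, any unit vector of $W$ is acceptable and lies in $\ker\bT$, so it is still an eigenvector with eigenvalue $0$. This shows that the maximizers are exactly the eigenvectors and gives the ordering of the eigenvalues.

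\emph{Double orthogonality.} Finally, since $\xi_j=\Pi_\Sigma\xi_j$ and $\xi_i=\Pi_\Sigma\xi_i$,
\[
\langle\xi_i,\Pi_{\VT}\xi_j\rangle=\langle\Pi_\Sigma\xi_i,\Pi_{\VT}\Pi_\Sigma\xi_j\rangle=\langle\xi_i,\bT\xi_j\rangle=\lambda_j\langle\xi_i,\xi_j\rangle=\lambda_j\delta_{ij}.
\]
This gives the second claim, generalizing Slepian's double orthogonality of PSWFs.
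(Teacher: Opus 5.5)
Your proposal is correct and follows the paper's route. You rewrite $\|\Pi_{\VT}\xi\|^2$ as the Rayleigh quotient of $\Pi_{\Sigma}\Pi_{\VT}\Pi_{\Sigma}$ on $\ima(\Pi_{\Sigma})$, apply Rayleigh--Ritz/max--min, and obtain $\langle\xi_i,\Pi_{\VT}\xi_j\rangle=\lambda_j\delta_{ij}$ by the same one-line computation the paper uses.

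The paper cites spectral theory without addressing compactness, so your Hilbert--Schmidt argument and explicit compactness hypothesis fill a real hole. Without such a hypothesis the eigenvalue claim can fail. For example, $\calH_{\VT}=L^{2}(\calV\times\calT)$ gives $\Pi_{\Sigma}\Pi_{\VT}\Pi_{\Sigma}=\Pi_{\Sigma}$, whose eigenvalue $1$ has infinite multiplicity whenever $\Sigma$ has positive measure.
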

\begin{IEEEproof}
See \cref{sec:proof_theo_solution}.
\end{IEEEproof}

\cref{Theo.solu_optimization} provides a general framework for constructing a set of orthonormal bandlimited signals $\set{\xi_{i}}_{i=1,2,\dots}$ where $\xi_{i}\in L^{2}(\calV\times\calT)$. Specific cases include:
\begin{enumerate} [i)]
\item In traditional GSP, the set of orthonormal bandlimited signals corresponds to the set of orthonormal eigenvectors of $\Pi_{\Lambda'}\Pi_{\calV'}\Pi_{\Lambda'}$, as stated in \cite[Theorem 2.3]{Tsitsvero2016}.

\item  Suppose the graph $\calG$ consists of a single vertex and the limiting projection $\Pi_{\calS}$ reduces to the time-limiting operator $\Pi_{\calT'}$. The orthonormal joint bandlimited signals are then the prolate spheroidal wave functions $\set{\psi_{i}}_{i\geq1}$ \cite{Slepian1961,Laudau1962,Pollak1961}, achieved by the eigenfunctions of $\Pi_{\Omega'}\Pi_{\calT'}\Pi_{\Omega'}$ defined in the Hilbert space $L^{2}(\Real)$.

\item Consider the subsets $\calS=\calV'\times \calT' \subseteq \calV\times \calT$ and $\Sigma = \Lambda'\times \Omega'\subseteq \Lambda\times \Omega$. The vertex-time limiting operator $\Pi_{\calS}$ and the spectral-frequency limiting operator $\Pi_{\Sigma}$ can be expressed as $\Pi_{\calS}=\Pi_{\calV'}\otimes\Pi_{\calT'}$ and 
$\Pi_{\Sigma}=\Pi_{\Lambda'}\otimes\Pi_{\Omega'}$, respectively. The set of orthonormal joint bandlimited signals is given by $\set{\xi_{i,j}}_{i,j\geq1} = \set{\phi_{i}\otimes\psi_{j}}_{i,j\geq1}$. Here, $\set{\phi_{i}}_{i\geq1}$ are the orthonormal bandlimited vectors given by the eigenvectors of $\Pi_{\Lambda'}\Pi_{\calV'}\Pi_{\Lambda'}$ in the vertex domain, and $\set{\psi_{j}}_{j\geq1}$ are the orthonormal bandlimited signals (prolate spheroidal wave functions), determined by the eigenfunctions of $\Pi_{\Omega'}\Pi_{\calT'}\Pi_{\Omega'}$ defined in the Hilbert space $L^{2}(\Real)$.
\end{enumerate}


For numerical implementation, we parametrize the vertex–time subspace $\calH_{\VT}$. We study two representative parameterizations: (i) a support‑limited subspace (see \cref{subsec.supp_limite_subspace}), and (ii) a Gaussian heat‑kernel subspace (See \cref{subsec.Gaussian_subspace}). Both constructions yield joint bandlimited functions that are highly concentrated in the vertex–time domain, and their parameters are chosen to improve reconstruction accuracy. In the support‑limited case, we assemble a dictionary of joint bandlimited atoms supported on selected vertex–time blocks and employ a dictionary‑learning procedure to select the most informative atoms. In the Gaussian case, we grid-search the spatial and temporal diffusion scales and choose the pair that maximizes a predefined concentration score; then, we reconstruct by projecting onto the resulting basis.

\subsection{Joint parametric dictionary learning with support-limited subspace for vertex-time signal reconstruction}
\label{subsec.supp_limite_subspace}

An overcomplete dictionary can enhance signal reconstruction by enabling sparse, flexible, and robust signal representations. Motivated by sensor networks in which different sensors may be active or provide measurements over distinct, possibly non-overlapping time periods, empirical observations indicate that signal energy is not uniformly distributed across vertices or over time. Instead, it tends to concentrate in a limited region of the vertex–time domain (cf.\ \cref{fig:heat_map_covid_19}). To capture this localized structure while maintaining computational feasibility, we consider a structured low-dimensional parametric family of atoms with a vertex–time support of the form $\calS=\calV'\times \calT'$, where $\calT'=[t_{c}-\ell/2,t_{c}+\ell/2]$ denotes a finite time interval centered at $t_{c}$ with duration $\ell$, and $\calV'\subseteq \calV$ is a vertex subset designed to capture the dominant energy localization behavior over $\calT'$. Under this model, atoms are not learned as free vectors; rather, they are generated analytically from parameters $(\calV',t_{c},\ell)$, so learning reduces to selecting $\calV'$ and $(t_{c},\ell)$ that best match the observed energy localization. This parameterization explicitly encodes where energy is expected to concentrate in the vertex–time plane, yielding energy-concentrated atoms for efficient reconstruction. 


Our objective is to learn a dictionary bandlimited to $\Sigma=\Lambda'\times\Omega'$ and maximally concentrated on $\calS$.
Let $\Phi(\calV')=\left(\phi_{k}(\cdot;\calV'\right)_{k\geq1}$, where for each $k\geq 1$, $\phi_{k}(\cdot;\calV'):\calV\rightarrow \Real$ is an eigenvector of $\Pi_{\Lambda'}\Pi_{\calV'}\Pi_{\Lambda'}$, which is a graph signal localized over $\calV'$. Let $\Psi(\calT')=\left(\psi_{n}(\cdot;\calT'\right)_{n\geq1}$, where for each $n\geq1$, $\psi_{n}(\cdot;\calT'): \calT \rightarrow \Real$ is a PSWF localized over $\calT'$. Each $\psi_{n}(\cdot;\calT')$ is an eigenfunction of $\Pi_{\Omega'}\Pi_{\calT'}\Pi_{\Omega'}$. We construct a dictionary candidate as
\begin{align}
\label{eq.dictionary_atoms}
\Xi(\calV',\bt_{c}, \boldell) = \Phi(\calV')\otimes\Psi(\calT'),
\end{align}
where 
\begin{align*}
\Phi(\calV')\otimes\Psi(\calT')=\parens*{\phi_{k}(\cdot;{\calV'}) \otimes \psi_{n}(\cdot;{\calT'})}_{k,n\geq1}
\end{align*}
is the set of signals localized over $\calS$. 
While dictionaries with analytical forms and predefined parameters capture global signal structure for reconstruction, learned dictionaries with optimized parameters outperform by adapting to observed signals \cite{Elad2006,Mairal2009,Rencker2019}. 

Our goal is to optimize these parameters $\calV'$, $t_{c}$ and $\ell$ to construct an energy-concentrated vertex–time dictionary that enables accurate and efficient signal reconstruction using training samples. We adopt a train-test protocol: the dictionary parameters $(\calV',t_c,\ell)$ are learned from the training samples, and the resulting dictionary is then fixed and used to reconstruct held-out testing signals. We assume the training and testing data are generated by the same underlying process, so that they share similar vertex--time energy localization patterns and consistent spectral/frequency characteristics induced by the same graph topology and bandwidth-related priors. Let $\calM_{\mathrm{train}} = \set{(v_{m},t_{m})\given m=1,\dots,M}$ be the \textit{discrete} vertex-time instants at which the noisy observations $y_{m}=f(v_{m},t_{m})+\epsilon_{m}$, $m=1,\dots,M$ are available, where  
$\epsilon_{m}$ are \gls{iid} zero-mean random variables with variance $\sigma^{2}$. Let the training samples be denoted by $\bY_{\mathrm{train}}=\bigl(y_m\bigr)_{m=1,\dots,M}$.
We consider a general (possibly irregular) sampling model in which each $(v_m,t_m)$ corresponds to a single measurement event, and the index $m$ enumerates samples rather than time instants. In particular, $\calM_{\mathrm{train}}$ may be generated by a random sampling mechanism, so observations may occur at arbitrary vertex-time pairs and need not be synchronized across vertices. We assume the signal of interest is localized within a finite interval $[0,\delta]$. 
The dictionary $\Xi(\calV',t_{c}, \ell)$ consists of vertex-time dictionary atoms(columns), i.e.,\textit{ continuous time functions defined on $\calV\times[0,\delta]$, whose energy is concentrated on $\calV'\times\calT'$, where $\calT'=[t_c-\ell/2,\,t_c+\ell/2]$ and $\calV'\subseteq\calV$}. Evaluating these atoms at the training instances $\calM_{\mathrm{train}}$ yields a finite dictionary matrix, denoted by $\Xi(\calM_{\mathrm{train}};\calV', t_{c}, \ell)$, where each column corresponds to one of the atoms. The dictionary learning problem is as follows: 
\begin{problem}[Joint Energy Concentrated Dictionary Learning]\label{prob:joint_P}
 Given the training measurement set $\calM_{\mathrm{train}}$, and the noisy observations $\bY_{\mathrm{train}}$, the goal is to learn the energy-concentrated vertex--time dictionary (i.e., supported by \cref{eq:ThetaS-1a} in \Cref{Theo.Feasible_Regoin_alpha}) by jointly estimating the sparse coefficients $\bx$ and the dictionary parameters $(\calV', t_{c},\ell)$. This leads to the following optimization problem:
\begin{align}
\begin{aligned}
\label{eq.object_I=1}
\argmin_{\bx, \calV', t_{c}, \ell}\ & \norm{\bY_{\mathrm{train}}-\Xi(\calM_{\mathrm{train}}; \calV',t_{c}, \ell)\bx}_{2}^{2}+\mu\norm{\bx}_{1}\\
\ST\ &  t_{c} \in [0,\delta],\ 0 \leq \ell \leq \delta\\
&\calV'~\text{is optimized in}~\cref{eq:ThetaS-1a}~\text{with}~\abs{\calV'}\leq K\\
&\text{and}~ K=\rank(\Pi_{\Lambda'}). 
\end{aligned}
\end{align}
Here, both the spectral support $\Lambda'$ and the frequency support $\Omega'$ (implicitly included in \cref{eq:ThetaS-1a}) are treated as prior knowledge and are estimated from the full data observations $\bY$ via a cumulative-energy criterion. Specifically, we select the smallest set of graph and temporal frequencies whose cumulative energy accounts for at least a prescribed fraction $\beta_{\Lambda'}$ and $\beta_{\Omega'}\in (0,1]$ of the total signal energy respectively, i.e., $\Lambda' \in \argmin_{\Gamma \subseteq \Lambda} \abs{\Gamma}, \ST \frac{\sum_{\lambda \in \Gamma} \norm{\Pi_{\lambda} \bY}_2^2}{\norm{\bY}_2^2}\ge \beta_{\Lambda'},$ and 
$\Omega'\in \argmin_{\Gamma\subseteq\Omega} \text{meas}(\Gamma), \ST \frac{\int_{\Gamma}\|\Pi_{\omega}\bY\|_{2}^{2}\ud\omega}{\int_{\Omega}\|\Pi_{\omega}\bY\|_{2}^{2}\ud\omega}\ge \beta_{\Omega'}$, where $\text{meas}(\Gamma)$ denotes the Lebesgue measure of $\Gamma$.
The $\ell_1$ regularization term leads to sparse coefficients $\bx$, while $\mu$ controls the strength of the regularization.  
\end{problem}

Although \Cref{prob:joint_P} captures the ideal objective of jointly optimizing the vertex subset $\calV'$, the dictionary parameters $(t_{c},\ell)$, and the sparse coefficients $\bx$, it is highly nonconvex and combinatorial. In particular, the discrete optimization over $\calV'$ is tightly coupled with the continuous dictionary parameters $(t_{c},\ell)$, making the problem intractable to solve globally. To obtain a computationally feasible solution, we decompose \Cref{prob:joint_P} into two subproblems: (i) vertex subset selection (See \cref{vertex_subset_selection}), and (ii) dictionary learning via sparse coding with a fixed vertex subset $\calV'$, where the continuous dictionary parameters $(t_{c},\ell)$ are optimized (See \Cref{subsubsec.T.opt}). This decomposition decouples the discrete subset selection from continuous parameter optimization, yielding a principled yet heuristic approximation to the joint objective.

\subsubsection{Vertex subset selection}\label{vertex_subset_selection}

An effective dictionary design requires each atom to concentrate its energy on a carefully selected subset of vertices, while still preserving sufficient spectral coverage to capture the underlying graph structure. This balance between vertex-domain localization and spectral-domain expressiveness is essential for achieving accurate reconstruction using a compact and informative dictionary.
To this end, we select vertex subsets $\calV'$ to balance vertex localization and spectral coverage, ensuring that the spectral component continues to capture the underlying graph information while enabling accurate reconstruction with fewer, more informative atoms. Motivated by the upper bound of the feasible region in \cref{Theo.Feasible_Regoin_alpha}, which characterizes the $(\alpha_{\calS},\beta_{\Sigma})$ pairs achieving maximum localization in both vertex and spectral domains, the subset $\calV'$ selection is posed as an energy-concentration maximization problem, shown in \Cref{prob:vertex_selection}.

\begin{subproblem}[Vertex Subset Selection for Energy-Concentrated Dictionary Design]
Given the spectral spread $\beta_{\Lambda'}$, the frequency spread $\beta_{\Omega'}$,
a time interval $\mathcal T'$, and a bandwidth constraint $K=\rank(\Pi_{\Lambda'})$, the goal is to select a vertex subset $\calV'$ such that the resulting dictionary atoms are well localized in the vertex domain (i.e., maximizing energy concentration over vertex subset $\calV'$) while maintaining sufficient spectral coverage. This leads to the following vertex selection problem:
\label{prob:vertex_selection}
\begin{align*}
\begin{aligned}
\argmax_{\calV'}\ & \cos\Big(\cos^{-1}\sqrt{\lambda_{\max}\left(\Pi_{\Lambda'}\Pi_{\calV'}\Pi_{\Lambda'}\right)\lambda_{\max}\left(\Pi_{\Omega'}\Pi_{\calT'}\Pi_{\Omega'}\right)}\\&-\cos^{-1}\left(\beta_{\Lambda'}\beta_{\Omega'}\right)\Big)\\
\ST\ &  \abs{\calV'}\leq K, \ K=\rank(\Pi_{\Lambda'}).
\end{aligned}
\end{align*}  
\end{subproblem}

\begin{algorithm*}[!htb]
\caption{Greedy selection based on energy concentration maximization}\label{algo:max}
\begin{algorithmic}[1]
\State Input: Spectral spread $\beta_{\Lambda'}$, frequency spread $\beta_{\Omega'}$, time interval $\calT'$ and bandwidth $K=\rank(\Pi_{\Lambda'})$.
\State Output: Optimized $\calV'$.
\State Initialize $\calV' = \emptyset$.
\While{$\abs{\calV'} \leq K$}
\State $v^{\ast}=\argmax_{v_{i}} \cos\left(\cos^{-1}\sqrt{\lambda_{\max}\left(\Pi_{\Lambda'}\Pi_{\calV'\cup {\{v_{i}\}}}\Pi_{\Lambda'}\right)\lambda_{\max}\left(\Pi_{\Omega'}\Pi_{\calT'}\Pi_{\Omega'}\right)}-\cos^{-1}\left(\beta_{\Lambda'}\beta_{\Omega'}\right)\right)$
\State $\calV' \leftarrow \calV' \cup \{v^{\ast}\}$.
\EndWhile
\end{algorithmic}
\label{vertices selection}
\end{algorithm*}

\begin{algorithm*}[!htb]
\caption{Joint energy concentrated dictionary (JECD) learning algorithm}\label{algo:mini}
\begin{algorithmic}[1]
\State Input: Training instances $\calM_{\mathrm{train}}$, training samples $\bY_{\mathrm{train}}=\left(y_{m}\right)_{m\in\{1,\dots,M\}}$, spectral spread $\beta_{\Lambda'}$, frequency spread $\beta_{\Omega'}$, learning rate $\eta_{1}$ and $\eta_{2}$, convergence tolerance $\epsilon$, signal length $\delta$.
\State Output: Optimized $\widetilde{\calV}$, $\widetilde{t_{c}}$ and $\widetilde{\ell}$
\State Initialize $t_{c}^{0}$, $\ell^{0}$, $\calV' = \emptyset$, $u=0$.
\Repeat
\State $\calV'^{(u)}\leftarrow$ \cref{vertices selection} 
\State $\bx^{(u)} = \mathrm{Lasso}(\bY_{\mathrm{train}},\Xi(\calM_{\mathrm{train}}; \calV'^{(u)},t_{c}^{(u)}, \ell^{(u)}))$ 
\State $t_{c}^{(u+1)}=t_{c}^{(u)}-\eta_{1}\ppfrac{}{t_{c}^{u}}\norm{\bY_{\mathrm{train}}-\Xi(\calM_{\mathrm{train}};\calV'^{(u)}, t_{c}^{(u)}, \ell^{(u)})\bx^{(u)}}_{2}^{2}$
\State $\ell^{(u+1)}=\ell^{(u)}-\eta_{2}\ppfrac{}{\ell^{u}}\norm{\bY_{\mathrm{train}}-\Xi(\calM_{\mathrm{train}};\calV'^{(u)},t_{c}^{(u)}, \ell^{(u)})\bx^{(u)}}_{2}^{2}$
\If {$t_{c}^{(u+1)}\notin [0,\delta] $ \text{or} $\ell^{(u+1)}> \delta$}
\State $t_{c}^{(u+1)}=\max(0,\min(t_{c}^{(u+1)}, \delta))$
\State $\ell^{(u+1)}=\max(0,\min(\ell^{(u+1)}, \delta))$
\EndIf
\State $u\leftarrow u+1$
\Until{$\norm{\mathrm{Loss}^{(u)}-\mathrm{Loss}^{(u-1)}} \leq \epsilon$}
\end{algorithmic}
\end{algorithm*}

We use a greedy approach to solve \Cref{prob:vertex_selection}, as summarized in \cref{algo:max}. 
Given the selected vertex subset $\calV'$, we then optimize the continuous dictionary parameters $(t_{c},\ell)$ using the training samples to learn the dictionary $\Xi(\calV',t_{c},\ell)$ via sparse coding.


\subsubsection{Optimization of continuous dictionary parameters $(t_{c},\ell)$}\label{subsubsec.T.opt}
With $\calV'$ fixed, the dictionary learning problem reduces to a sparse coding problem in which only the temporal parameters $(t_c,\ell)$ and the sparse coefficients $\bx$ are optimized. This leads to \Cref{subprob.conti_learning}.

\begin{subproblem}\label{subprob.conti_learning}
Given the training measurement set $\calM_{\mathrm{train}}$, and the noisy observations $\bY_{\mathrm{train}}$, the goal is to learn the dictionary by jointly estimating the sparse coefficients $\bx$ and the dictionary parameters $(t_{c},\ell)$. This leads to the following optimization problem:
\begin{align}
\begin{aligned}\label{eq.caseI=N}
\argmin_{\bx, t_{c}, \ell}\ & \norm{\bY_{\mathrm{train}}-\Xi(\calM_{\mathrm{train}}; \calV',t_{c}, \ell)\bx}_{2}^{2}+\mu\norm{\bx}_{1}\\
\ST\ &  t_{c} \in [0,\delta],\ 0 \leq\ell\leq \delta.
\end{aligned}
\end{align}
Here, $\bx$ contains sparse coefficients, and $\mu$ is a regularization parameter. 
\end{subproblem}

The optimization problem \cref{eq.caseI=N} is not jointly convex but is convex with respect to each variable when the others are fixed. We optimize by alternating minimization over $\bx$, $t_{c}$ and $\ell$ until a convergence criterion. At the $u$-th iteration, the following minimizations are performed sequentially:
\begin{align*}
\bx^{u+1}&\leftarrow\argmin_{\bx^{u}} \norm{\bY_{\mathrm{train}}-\Xi(\calM_{\mathrm{train}};\calV',t_{c}^{u},\ell^{u})\bx^{u}}_{2}^{2}+\mu\norm{\bx^{u}}_{1} \\
t_{c}^{u+1}&\leftarrow\argmin_{t_{c}^{u}}\norm{\bY_{\mathrm{train}}-\Xi(\calM_{\mathrm{train}};\calV’,t_{c}^{u},\ell^{u})\bx^{u+1}}_{2}^{2} \\
&\qquad\qquad\ST t_{c} \in [0,\delta]\\
\ell^{u+1} &\leftarrow \argmin_{\ell^{u}} \norm{\bY_{\mathrm{train}}-\Xi(\calM_{\mathrm{train}};\calV',t_{c}^{u+1},\ell^{u})\bx^{u+1}}_{2}^{2}\\
&\qquad\qquad\ST 0 \leq \ell\leq \delta.
\end{align*}
The sparse coefficients $\bx$ are updated using Lasso regression\cite{Tibshirani1996}. After updating $\bx$, the dictionary is updated by optimizing $t_{c}$ and $\ell$ using the gradient descent algorithm \cite{Lecun1998}. We summarize the complete procedure in \cref{algo:mini}, referred to as joint energy concentrated dictionary (JECD), where Loss denotes the objective function in \cref{eq.object_I=1}.

Suppose the optimization solution of \Cref{prob:joint_P} is given by $(\widetilde{\calV}, \widetilde{t}_{c}, \widetilde{\ell})$. 
With the optimized dictionary $\Xi(\cdot;\widetilde{\calV},\widetilde{t}_{c}, \widetilde{\ell})$, the reconstruction process is formulated as:
\begin{align}\label{eq.reconstruction}
\argmin_{\bx} &\norm{\bY_{\mathrm{test}}-\Xi(\calM_{\mathrm{test}}; \widetilde{\calV}, \widetilde{t}_{c}, \widetilde{\ell})\bx}_{2}^{2}+\mu\norm{\bx}_{1}
\end{align}
where $\calM_{\mathrm{test}}$ denotes test instances and $\bY_{\mathrm{test}}$ the corresponding test sample values. Suppose the optimal solution in \cref{eq.reconstruction} is given by $\bx^*$. The reconstructed vertex-time graph signal is given by $\widehat{f}(v,t)=\Xi(\{(v,t)\}; \widetilde{\calV}, \widetilde{t}_{c}, \widetilde{\ell})\bx^*$ for each $(v,t)\in\calV\times\calT$. The recovery performance is measured by the relative square error:
\begin{align}
\label{recovery_error}
\mathrm{RSE} = \frac{\sum_{(v,t)\in \calM_\mathrm{val}}\left(f(v,t)-\widehat{f}(v,t)\right)^2}{\sum_{(v,t)\in \calM_\mathrm{val}} f(v,t)^2},
\end{align}
where $\calM_\mathrm{val}$ denotes a validation set of sample points.


\subsection{Gaussian subspace basis for vertex-time graph signal reconstruction}\label{sec.gaussian_subspace}
\label{subsec.Gaussian_subspace}
Consider the closed Gaussian heat–kernel subspace $\calH_{\VT} = \overline{\spn}\set{ e^{-\tau_{v}\bL}\delta_{v_{0}}\otimes \tfrac{1}{\sqrt{4\pi\tau_{t}}}e^{-\frac{(t-t_{0})^{2}}{4\tau_{t}}} \allowbreak\given\allowbreak v_{0}\in\calV',\; t_{0}\in\calT' }$, parameterized by diffusion scales $\tau_{v},\tau_{t}>0$. Here, $\delta_{v_{0}}$ is the Kronecker delta at vertex $v_{0}$, $e^{-\tau_{v}\bL}\delta_{v_{0}}$ is the graph heat atom spatially localized about $v_{0}$, and $\tfrac{1}{\sqrt{4\pi\tau_{t}}}e^{-(t-t_{0})^{2}/(4\tau_{t})}$ is the temporal Gaussian kernel centered at time $t_{0}$. Each atom is separable and centered at $(v_{0},t_{0})$, with $\tau_{v}$ and $\tau_{t}$ controlling the spatial and temporal spreads respectively. Let $\Pi_{\tau_{v},\tau_{t}}$ denote the orthogonal projector onto $\calH_{\VT}$. We seek an orthonormal family of jointly bandlimited signals that maximizes vertex–time concentration. Formally, $\xi_{i} = \argmax_{\xi_{i}}\ \norm{\Pi_{\tau_{v},\tau_{t}}\xi_{i}}$, $\ST \norm{\xi_{i}} = 1, \Pi_{\Sigma}\xi_{i}=\xi_{i}, \langle\xi_{i},\xi_{j}\rangle=0$ for $j<i$, where $\Pi_{\Sigma}$ enforces the joint spectral–frequency bandlimit.

In practice, we search over discrete candidate scales. Let $\calP_{v}=\set*{\tau_{v}^{(1)},\dots,\tau_{v}^{(P)}}$ and $\calP_{t}=\set*{\tau_{t}^{(1)},\dots,\tau_{t}^{(Q)}}$. For each pair $(\tau_{v}^{(p)},\tau_{t}^{(q)})$ form $\bA^{(p,q)} := \Pi_{\Sigma}\,\Pi_{\tau_{v}^{(p)},\tau_{t}^{(q)}}\,\Pi_{\Sigma}$, 
and solve the eigenproblem $\bA^{(p,q)}\xi=\lambda\xi$ on the subspace $\ima(\Pi_{\Sigma})$. Denote the nonincreasing eigenvalues by $\lambda_{1}^{(p,q)}\ge\lambda_{2}^{(p,q)}\ge\cdots\ge\lambda_{K}^{(p,q)}$, where $K=\operatorname{rank}(\bA^{(p,q)})$. Choose the optimal scales by maximizing a concentration score $\Phi$, for example, the total energy of the top $K$ eigenvalues:
\begin{align}\label{opt_diffusion_scales}   
(\tau_{v}^{\star},\tau_{t}^{\star}) \in \argmax_{(\tau_{v}^{(p)},\tau_{t}^{(q)})\in\calP_{v}\times\calP_{t}} \Phi(\lambda_{1}^{(p,q)},\dots,\lambda_{K}^{(p,q)}),
\end{align}
with $\Phi(\lambda_{1},\dots,\lambda_{K})=\sum_{k=1}^{K}\lambda_{k}$ as a canonical choice.

With the optimal pair $(\tau_{v}^{\star},\tau_{t}^{\star})$ fixed, let $\bA^{\star}=\bA^{(p^{\star},q^{\star})}$. The leading eigenvectors of $\bA^{\star}$ yield an orthonormal set $\{\xi_{k}^{\star}\}_{k=1}^{K}$ of jointly bandlimited atoms maximally concentrated in the chosen Gaussian subspace. Given a noisy observation $f_{\mathrm{obs}}(v,t)$, we reconstruct by projection: $\widehat{f}(v,t)=\sum_{k=1}^{K}\langle f_{\mathrm{obs}},\xi_{k}^{\star}\rangle\,\xi_{k}^{\star}(v,t)$ and evaluate performance using the RSE in \cref{recovery_error}.


\section{Graph topology inference using vertex-time samples}
\label{sec.gl}

In this section, we propose a graph inference approach leveraging the energy concentration concept stemming from the uncertainty principle. We 
model the signals as being approximately bandlimited in the spectral-frequency domain and concentrated in the vertex-time domain. Our aim is to learn a graph that maximizes spectral-frequency energy concentration, thereby capturing the essential structure of the signals while honoring the uncertainty principle’s trade-off between vertex and spectral localization.

Given a set of $M$ graph signals $\{\by_{m}\}_{m=1}^{M}$ collected at different time instances, we model each signal as approximately bandlimited, such that $\by_m \approx \bU \bs_m$, with negligible approximation error. Here, $\bU$ is an orthonormal matrix, viewed as the graph Fourier basis, and each $\bs_{m}$ is a sparse coefficient vector. We further design $\bU$ to maximize energy concentration in the joint spectral–frequency domain by maximizing the upper bound in \cref{eq.beta_range_special_case}, so the learned graph topology aligns with the observed signals while enforcing the desired energy concentration. Define the GFT coefficient matrix as $\bS \triangleq [\bs_{1},\dots, \bs_{M}]\in \Real^{N\times M}$. We assume that all coefficients share a common zero support, implying that $\bS$ is block-sparse with multiple zero rows. 

Specifically, let
$\calB_{K}\triangleq \bigl\{ \bS = [\bs_{1},\dots, \bs_{M}]\in \Real^{N\times M} : \bS(i,:)=\mathbf{0}\,\text{for all }i\notin \calK \subseteq \calV,\; K = |\calK| \bigr\}$,
where $\bS(i,:)$ is the $i$-th row of $\bS$, and each column $\bs_{m}$ represents the GFT of $\by_{m}$. Letting $\bY \triangleq [\by_{1}, \dots, \by_{M}]$ be the observation matrix, the model is compactly written as $\bY=\bU\bS$.

\begin{algorithm*}[!htb]
\caption{Energy concentrated graph learning (ECGL) algorithm}\label{algo:gl}
\begin{algorithmic}[1]
\State Input: Observations $\bY = [\by_{1}, \dots, \by_{M}]$, vertex spread $\alpha_{\calV'}$, time spread $\alpha_{\calT'}$, learning rate $\eta$, tolerance $\epsilon$, and weight $\gamma$.
\State Output: Optimized $\widetilde{\bL}$.
\State Initialize $\bU^{0}$, $u=0$.
\Repeat
\State $\bS^{u+1} = \Pi_{\calB_{K}}\bigl((\bU^{u})\T\bY\bigr)$ 
\State $\bU^{u+1} = \bU^{u}-\eta \Bigl(\ppfrac{}{\,\bU^{u}}\norm{(\bU^{u})\T\bY-\bS^{(u)}}_{F}^{2} + \gamma \ppfrac{}{\,\bU^{u}} \Bigl(\norm{\Pi_{\Lambda'}\Pi_{\calV'}\Pi_{\Lambda'}}_{2} - \frac{(\alpha_{\calV'}\alpha_{\calT'})^2}{\lambda_{\max}(\Pi_{\Omega'}\Pi_{\calT'}\Pi_{\Omega'})}\Bigr)^2 \Bigr)$
\State $\bL^{(u+1)} = \bU^{(u)}\,\Lambda\,(\bU^{(u)})\T$
\State $u \leftarrow u+1$
\Until{$\bigl|\!\norm{(\bU^{(u+1)})\T\bY-\bS^{(u+1)}}_{F}^{2} - \norm{(\bU^{(u)})\T\bY-\bS^{(u)}}_{F}^{2}\bigr| \leq \epsilon$}
\end{algorithmic}
\end{algorithm*}

Our objective is to jointly learn the orthonormal transform $\bU$, the sparse matrix $\bS$ representing the GFT coefficients of the signals, and the underlying graph topology, characterized by the Laplacian matrix $\bL$, which admits the columns of $\bU$ as its eigenvectors. This is formulated as the following optimization problem:
\begin{align}
\begin{aligned}\label{opt.gl}
&\min_{\bL, \bU, \Lambda \in \Real^{N\times N}, \bS\in \Real^{N\times M}} \norm{\bY-\bU\bS}_{F}^{2} + \gamma \Big(1-\cos\\
&\big(\cos^{-1}\sqrt{\lambda_{\max}\left(\Pi_{\Lambda'}\Pi_{\calV'}\Pi_{\Lambda'}\right)\lambda_{\max}\big(\Pi_{\Omega'}\Pi_{\calT'}\Pi_{\Omega'}\big)}\\
&-\cos^{-1}\big(\alpha_{\calV'}\alpha_{\calT'}\big)\big)\Big)+ \text{TV}(\bY,\bL) +  \mu \norm{\bL}_{F}^{2},\\
\ST\ 
& \Pi_{\calW'}=\bU\Sigma_{\calW'}\bU\T,\ 
\bU\bU\T=\bI, \bu_{1} =  \bone /\sqrt{N},\\
&\bL\bU = \bU\Lambda, \bL\in \calL,\ \tr(\bL)=p,\\
& \Lambda \succeq \textbf{0}, \Lambda_{ij}=\Lambda_{ji}=0, \forall i\neq j,\ \bS\in \calB_{K},
\end{aligned}
\end{align}
where $\calL$ denotes the class of graph Laplacian matrices, i.e., $\calL\triangleq \set{\bL = (L_{ij}) \in \calS_{+}^{N} \given \bL\textbf{1}=\textbf{0}, L_{ij}=L_{ji} \leq 0, \forall i\neq j}$, and $\calS_{+}^{N}$ is the set of real, symmetric and positive semidefinite matrices. The term $\text{TV}(\bL,\bY)$ represents the $\ell_{1}$-norm total variation of the observed signal $\bY$ on the graph, defined as $\text{TV}(\bL,\bY)\triangleq \sum_{m=1}^{M}\sum_{i,j=1}^{N} L_{ij} \abs{Y_{im}-Y_{jm}}$. 

The constraints in \cref{opt.gl} force $\bU$ to be unitary and require that it includes a vector proportional to the all-ones vector. The Laplacian matrix $\bL$ is constrained to be a valid combinatorial Laplacian, with its eigenvectors given by the columns of $\bU$ and its eigenvalues represented by the diagonal entries of $\Lambda$. The trace constraint $\text{tr}(\bL)=p>0$ is imposed to avoid the trivial solution. 

The optimization problem \cref{opt.gl} is not jointly convex but is convex with respect to each variable when the others are fixed. We solve the problem by alternating minimization over $\bS$, $\bU$ and $\bL$ until convergence. At the $u$-th iteration, the following minimizations are performed sequentially:
\begin{align*}
\bS^{(u+1)}&\leftarrow\argmin_{\bS^{(u)}\in \calB_{K}}~ \norm{\left(\bU^{(u)}\right)\T\bY-\bS^{(u)}}_{F}^{2}\\
\bU^{(u+1)}&\leftarrow\argmin_{\bU^{(u)}}~ \norm{\left(\bU^{(u)}\right)\T\bY-\bS^{(u)}}_{F}^{2}\\
&\qquad + \gamma \left(\norm{\Pi_{\Lambda'}\Pi_{\calV'}\Pi_{\Lambda'}}_{2}-\frac{\left(\alpha_{\calV'}\alpha_{\calT'}\right)^2}{\lambda_{\max}(\Pi_{\Omega'}\Pi_{\calT'}\Pi_{\Omega'})}\right)^2\\
&\qquad\ST \Pi_{\Lambda'}= \bU^{(u)} \Sigma_{\Lambda'}\left(\bU^{(u)}\right)\T,\
 \bU^{(u)}\left(\bU^{(u)}\right)\T=\bI\\
&\qquad\qquad~ \bu_{1}^{(u)} = \bone / \sqrt{N},\\
\bL^{(u+1)} &\leftarrow \argmin_{\bL^{(u)}} \text{tr}(\bY\bL^{(u)}\bY\T)+\mu \norm{\bL^{(u)}}_{F}^{2},\\
&\qquad\ST \bL^{(u)}\in\calL, \text{tr}(\bL^{(u)})=p,\
\bL^{(u)}\bU^{(u)}=\bU^{u}\Lambda,\\
&\qquad\qquad \Lambda \succeq \textbf{0}, \Lambda_{ij}=\Lambda_{ji}=0, \forall i\neq j.
\end{align*}

The complete procedure is summarized in \cref{algo:gl}, referred to as \emph{energy-concentrated graph learning (ECGL)}. With the optimization solution $\widetilde{\bL}$, we measure performance using the correlation coefficient between the true Laplacian entries $L_{ij}$ and their estimates $\tilde{L}_{ij}$:
\begin{align}
\label{gl_rho}
\rho(\bL,\widetilde{\bL}) \triangleq \frac{\sum_{ij}L_{ij}\tilde{L}_{ij}}{\sqrt{\sum_{ij}L_{ij}^2}\sqrt{\sum_{ij}\tilde{L}_{ij}^2}}.
\end{align}
We evaluate the topology inference using the average recovery error of the adjacency matrix:
\begin{align}
\label{gl_adj}
\bar{\varepsilon}_{F} \triangleq \frac{\norm{\bA-\widetilde{\bA}}_{F}}{N(N-1)},
\end{align}
where $\bA$ is the ground-truth binary adjacency matrix and $\widetilde{\bA}$ is derived from $\widetilde{\bL}$. We also adopt edge precision and recall \cite{Manning2008}, which quantify the proportion of correctly detected edges and the proportion of true edges recovered. Defining $\calE_{g}$ and $\calE_{r}$ as the ground-truth and recovered edge sets, respectively, these metrics are:
\begin{align}
\label{gl_pre_recall}
\text{Precision} = \frac{\abs{\calE_{g}\cap \calE_{r}}}{\abs{\calE_{r}}}, \quad
\text{Recall}= \frac{\abs{\calE_{g}\cap \calE_{r}}}{\abs{\calE_{g}}}.
\end{align}

\section{Experimental Results}\label{sec.exp}

\subsection{Vertex-time graph signal reconstruction}\label{Exper_setup}
We validate our theoretical developments via numerical experiments on three tasks. First, we evaluate JECD for vertex–time signal reconstruction and compare it against several baselines using two real datasets: the California COVID‑19 county case counts and the California traffic flow measurements. Second, we assess the Gaussian heat‑kernel subspace basis on synthetic vertex–time signals to quantify its reconstruction accuracy and noise robustness. Third, we test ECGL on a synthetic graph for topology inference and compare it with state‑of‑the‑art graph learning algorithms. 

\subsubsection{Support-limited subspace case}
We split each data set into training and test sets. The JECD $\Xi$ is trained using a proportion $p_{o}$ of the training samples. We then reconstruct the test set by using $p_{o}$ of its samples and measure performance on the remaining $1-p_{o}$. We compare our approach with the following vertex-time dictionaries for vertex-time graph signal reconstruction:
\begin{enumerate}[i)]
\item \Gls{JFT}\cite{Loukas2016}: The JFT dictionary is constructed from a set of joint vertex-time Fourier bases $\set{\phi_{k}\otimes\psi_{l} \given k=1,\dots,K,\ l=1,\dots,L}$. Here, $\phi_{k}$ is the $k$-th orthonormal eigenvector of the graph Laplacian $\bL$, and $\psi_{l}$ is an element of the orthonormal Fourier basis in the time domain, given by $\set{\frac{1}{\sqrt{2\pi}}, \frac{1}{\sqrt{\pi}}\cos(lt), \frac{1}{\sqrt{\pi}}\sin(lt)}_{l\geq1}$. Both $K$ and $L$ are tunable hyperparameters. 
\item \Gls{STVFT} \cite{Grassi2018}: We design the mother vertex-time kernel in the short time-vertex Fourier basis in a separable manner. In the vertex domain, following \cite{Grassi2018}, the localized graph basis is $\bh_{p,q} = (\bU h(\Lambda - \Lambda_{q}) \bU^{\ast}) \bdelta_{p}$, where $h(\cdot)$ is adopted as an Itersine kernel \cite{Perraudin2014}, given by $h(\Lambda) = \sin(0.5\pi\cos((\pi\Lambda)^2)$ with uniformly $Q$ translations $\Lambda_{q} = \diag(\frac{\lambda_{\max}}{Q}q)$ for $q = 1, \ldots, Q$. Here, $\bU$ is the unitary matrix whose columns are the Laplacian eigenvectors, $\Lambda$ is a diagonal matrix with non-negative real eigenvalues $\set{\lambda_{i}}_{i=1,\dots,N}$, and $\bdelta_p$ is a Kronecker delta centered at vertex $p$.
In the time domain, we use Gabor functions $g_{m,n}(t) = \frac{1}{\sqrt{2\pi}\rho}e^{-\frac{(t-m\tau_{0})^2}{2\rho^2}} e^{j n\omega_{0}}$ where $m, n \in \mathbb{Z}$, $\tau_{0}$ and $\omega_{0}$ are time and frequency shifts, and $\rho$ controls the Gaussian window width. 
The STVFT dictionary is constructed from the short-time-vertex Fourier bases $\set{\bh_{p,q}\otimes g_{m,n}(t)}_{p,q,m,n}$. The parameters $Q$, $\tau_{0}$, $\omega_{0}$, and $\sigma$ are tunable hyperparameters.
\item \Gls{STVWT} \cite{Grassi2018}: The mother vertex-time kernel in the spectral time-vertex wavelet basis is also designed in a separable manner. For the vertex dimension, the localized graph wavelet basis is $\bh_{p,s}=(\bU h(s\Lambda)\bU^{\ast})\bdelta_{p}$, where $h(s\Lambda)$ is the scaled Itersine kernel \cite{Perraudin2014} and $\bdelta_p$ is a Kronecker delta centered at vertex $p \in \calV$. For the time dimension, we use a Morlet wavelet $\psi_{a,b}(t) = \frac{1}{\sqrt{a}} e^{-\frac{(t-b)^2}{2a^2}} e^{j\omega_{0} \frac{t-b}{a}}$ with $a>0$ and $b\in \Real$ as the scale and translation parameters, controlling dilation and time shift respectively, and $\omega_{0}$ as the central frequency. The STVWT dictionary is constructed from a collection of $\set{\bh_{p,s}\otimes\psi_{a,b}(t)}_{p,s,a,b}$. Here, $s$, $a$ and $b$ are tunable hyperparameters.
\item Naive extension of GUP (NEGUP) \cite{Tsitsvero2016}: In the vertex domain, the graph localized basis is $\Phi(\calV')=\left(\phi_{k}(\cdot;\calV')\right)_{k\geq1}$, where for each $k\geq 1$, $\phi_{k}(\cdot;\calV'):\calV\rightarrow \Real$ is an eigenvector of $\Pi_{\Lambda'}\Pi_{\calV'}\Pi_{\Lambda'}$, which is a graph signal localized over $\calV'$. Here $\calV'$ is determined by the right upper bound of the feasible region, shown in Theorem 2.1 in \cite{Tsitsvero2016}, as follows
    \begin{align*}
        \argmax_{\calV'}\ &\cos\left(\cos^{-1}\sqrt{\lambda_{\max}\left(\Pi_{\Lambda'}\Pi_{\calV'}\Pi_{\Lambda'}\right)}-\cos^{-1}\beta_{\Lambda'}\right)\\
        \ST\ &  \abs{\calV'}\leq K, \ K=\rank(\Pi_{\Lambda'}).
    \end{align*}
In the time domain, the localized basis is $\Psi(\calT')=\left(\psi_{n}(\cdot;\calT')\right)_{n\geq1}$, where for each $n\geq1$, $\psi_{n}(\cdot;\calT'): \calT \rightarrow \Real$ is a PSWF localized over $\calT'$. Each $\psi_{n}(\cdot;\calT')$ is an eigenfunction of $\Pi_{\Omega'}\Pi_{\calT'}\Pi_{\Omega'}$. Here $\calT'$ is predefined by the signal's length, denoted as $\calT'=[-\ell/2,\ell/2]$. The NEGUP dictionary is constructed by $\Phi(\calV')\otimes\Psi(\calT')$.
\end{enumerate}

We evaluate vertex-time signal reconstruction on a COVID-19 dataset from The New York Times, which compiles reports from state and local health agencies.\footnote{\url{https://github.com/TorchSpatiotemporal/tsl}} We extract records for California’s 58 counties from the first day each reported cases, spanning July 29, 2020, to August 1, 2022. The data is split into a training set $\calM_{\mathrm{train}}$ (July 29, 2020–July 30, 2021) and a test set $\calM_{\mathrm{test}}$ (July 31, 2021–August 1, 2022). Each county is treated as a vertex, with edges connecting adjacent counties. We 
randomly select a proportion $p_o$ of the training samples to optimize the dictionary $\Xi$. We then reconstruct the test set using $p_o$ of its samples and evaluate the performance on the remaining $1-p_o$ proportion according to \cref{recovery_error}. 


\begin{figure}[!htbp]
\centering
\includegraphics[width=1\columnwidth]{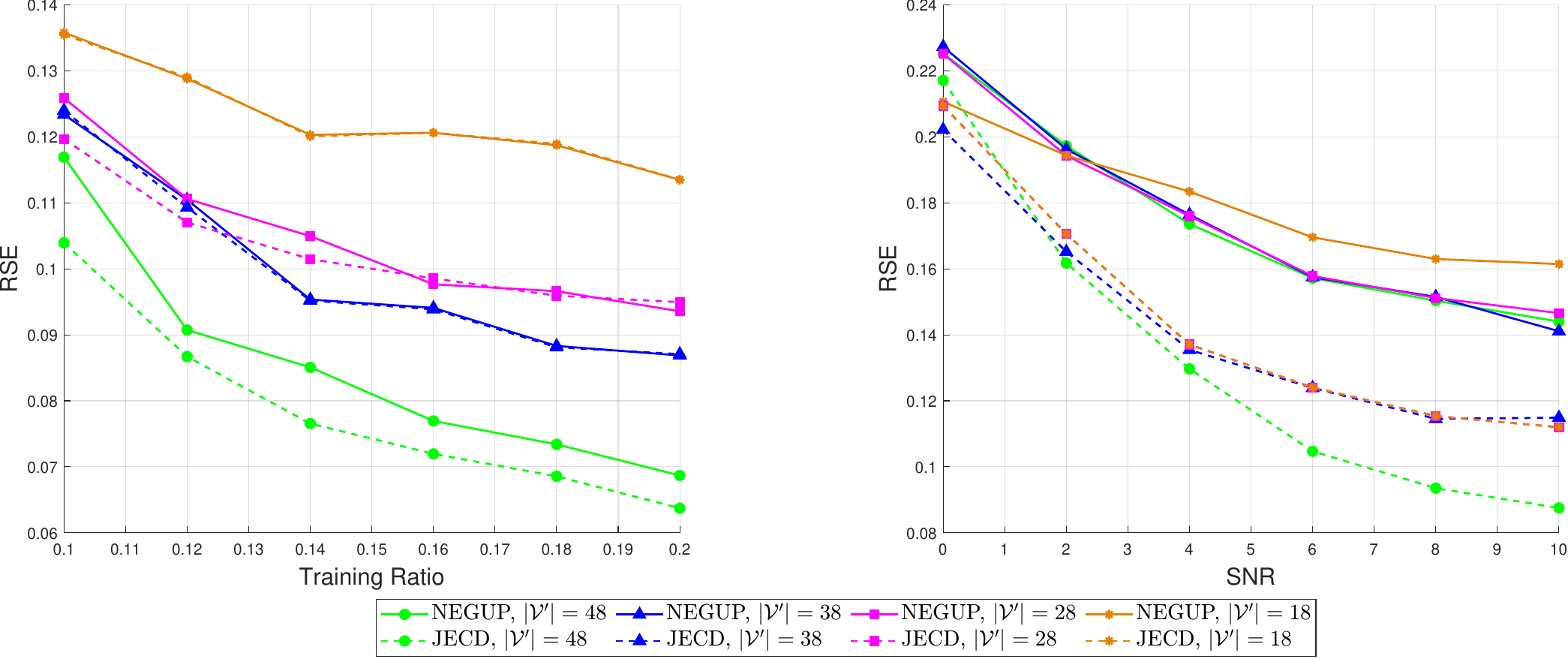}
\begin{subfigure}[b]{0.485\columnwidth}
    \phantomsubcaption
    \label{fig:Covid_NE_JECD_train_ratio}
    \vspace{1mm}
    \centering \footnotesize (a) Varying training ratios and no additive noise.
\end{subfigure}
\hfill
\begin{subfigure}[b]{0.485\columnwidth}
    \phantomsubcaption
    \label{fig:Covid_NE_JECD_SNR}
    \centering \footnotesize (b) Varying SNR and fixed training ratio of $0.2$.
\end{subfigure}
\caption{Reconstruction error over test set under different conditions. Each point in the figure is obtained by 10 repetitions.} 
\end{figure}


By setting different values of $\beta_{\Lambda'}$ at $99\%$, $97\%$, $95\%$, and $92.5\%$, the resulting sets $\calV'$ have cardinalities of $48$, $38$, $28$, and $18$, respectively, for both JECD and NEGUP. \Cref{fig:Covid_NE_JECD_train_ratio} and \Cref{fig:Covid_NE_JECD_SNR} show the effect of $\beta_{\Lambda'}$ on reconstruction performance across training ratios from $0.1$ to $0.2$ and SNR levels from $0$ dB to $10$ dB. JECD consistently outperforms NEGUP under all conditions, with the highest accuracy achieved when $\abs{\calV'}=48$. These findings confirm JECD’s ability to exploit temporal dependencies and adapt to time-varying data, resulting in a more effective dictionary design than NEGUP’s static graph-based approach.

To further highlight JECD's advantages, it is also compared against JFT, STVFT, and STVWT. \Cref{fig:Covid_NE_JECD_MSE} shows that JECD maintains the lowest RSE across various training ratios and SNR levels, demonstrating robust performance in noisy and dynamic scenarios and confirming its overall superiority among the tested methods.

\begin{figure}[!htbp]
\centering
\vspace{-4mm}
\begin{subfigure}[t]{0.49\columnwidth}
\centering
\includegraphics[width=\columnwidth, trim={3.5cm 9cm 3.5cm 9cm}, clip]{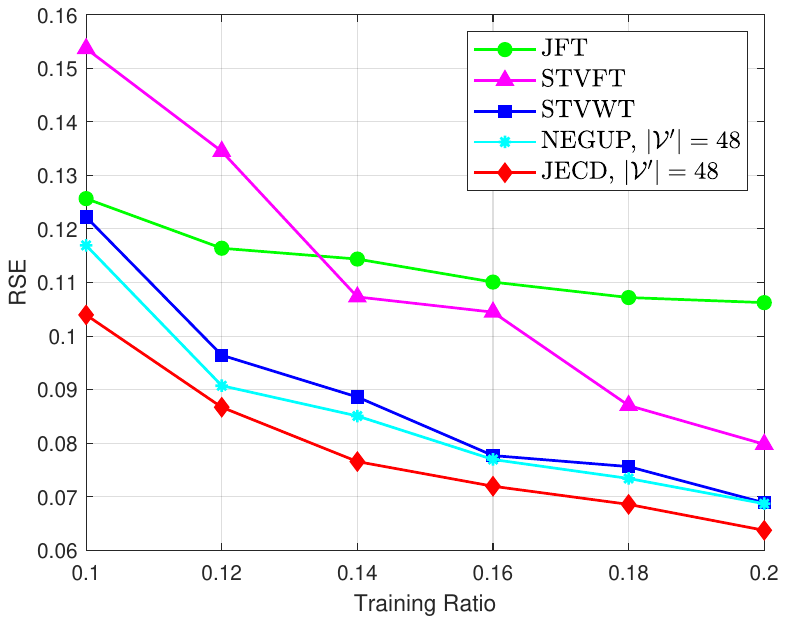}
\caption{RSE with varying training ratios and no additive noise. \vspace{-1em}}
\label{fig:Covid_MSE_TrainingRatio}
\end{subfigure}
\hfill
\begin{subfigure}[t]{0.49\columnwidth}
\centering
\includegraphics[width=\columnwidth, trim={3.5cm 9cm 3.5cm 9cm}, clip]{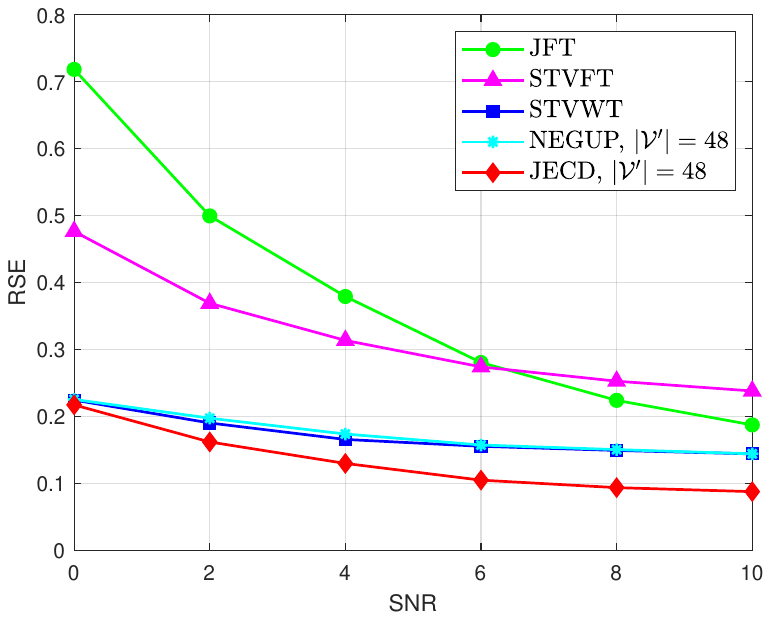}
\caption{RSE with varying SNR levels and a fixed training ratio of $0.2$.}
\label{fig:Covid_MSE_SNR}
\end{subfigure}
\caption{Reconstruction error under varying conditions. Each data point is the average of 10 experiments.}
\label{fig:Covid_NE_JECD_MSE}
\end{figure}

\subsubsection{Gaussian subspace case}\label{sec.exp_gaussian_basis}
We evaluate the Gaussian heat-kernel subspace basis on synthetic data. An Erd\H{o}s-R\`{e}nyi graph with $N=40$ nodes and connection probability $0.5$ is generated. The ground-truth vertex–time signal is $(K,\omega_{0})$-bandlimited and defined by $y = \sum_{i=1}^{K} \bu_{i}\otimes f$, where $\bu_{i}$ is the $i$-th eigenvector of $\bL$ and $f = 2\frac{\sin(\omega_{0}t)}{(\omega_{0}t)}-\frac{\sin((\omega_{0}-2)t)}{((\omega_{0}-2)t)}+\frac{\sin((\omega_{0}-4)t)}{2((\omega_{0}-4)t)}$.
We set $K=2$ and $\omega_{0}=10$, and sample $M=300$ uniformly spaced time points over $t\in[-2.5,2.5]$ (see \cref{fig:syn_visual}). Visual inspection of the ground-truth heat map motivates placing the Gaussian-subspace atom center at $(v_{0},t_{0})=(36,0)$.

Given candidate diffusion scales $\calP_{v} = \set{1,0.5,0.1,0.01}$ and $\calP_{t} = \set{1,0.5,0.1,0.01}$, and the optimal pair $(\tau_{v}^{\star},\tau_{t}^{\star}) = (0.1, 0.01)$ is chosen via \cref{opt_diffusion_scales}. With this choice, we form $\bA^{\star}=\Pi_{\Sigma}\Pi_{\tau_{v}^{\star},\tau_{t}^{\star}}\Pi_{\Sigma}$, and the leading eigenvectors of $\bA^{\star}$ form an orthonormal basis of $\Sigma = (K,\omega_0)$-bandlimited signals that is maximally concentrated in the vertex-time domain. To assess noise robustness, we create the observed graph signal by adding Gaussian noise to the ground truth graph signal with different SNR levels ranging from $-8$ db to $2$ db. We compare the proposed basis with JFT, STVFT, and STVWT under the same graph bandwidth $K=2$ and (time) bandlimit $\omega_{0}=10$. For each method, we reconstruct by projecting the noisy observation onto that method’s span. As observed in \cref{fig:syn_RSE_SNR}, the Gaussian heat-kernel subspace basis achieves the lowest RSE across all tested SNRs, outperforming competing methods even at low SNR and remaining the top performer as SNR increases.


\begin{figure}[!htbp]
\centering
\vspace{-4mm}
\begin{subfigure}[t]{0.48\columnwidth}
\centering
\includegraphics[width=\columnwidth, trim={3.5cm 9cm 3.5cm 9cm}, clip]{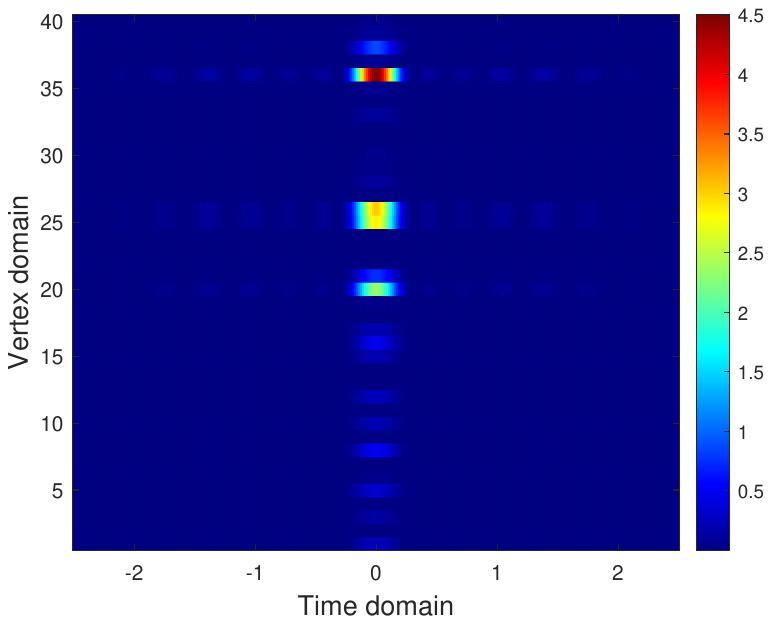}
\caption{Heat map of the power of the ground-truth graph signal.}
\label{fig:syn_visual}
\end{subfigure}
\hfill
\begin{subfigure}[t]{0.48\columnwidth}
\centering
\includegraphics[width=\columnwidth, trim={3.5cm 9cm 3.5cm 9cm}, clip]{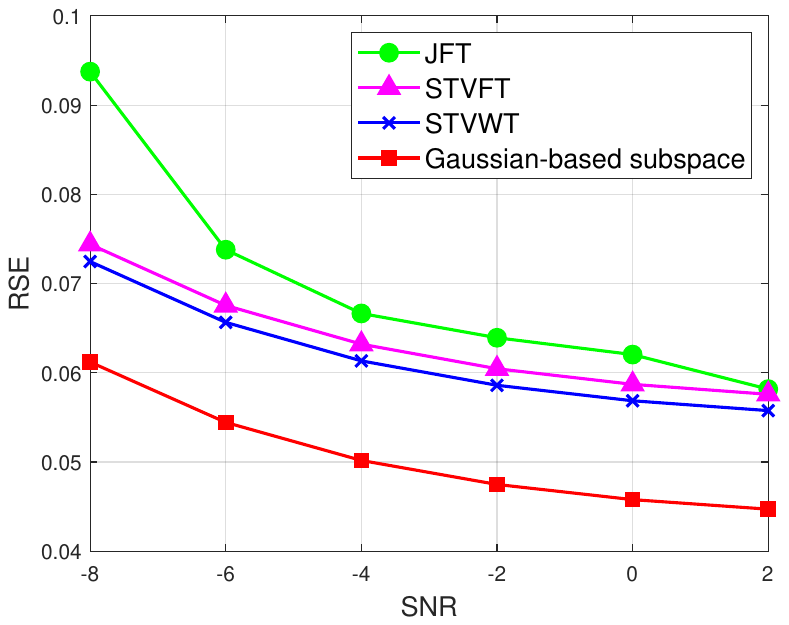}
\caption{RSE with varying SNR levels.}
\label{fig:syn_RSE_SNR}
\end{subfigure}
\caption{Ground-truth graph siganl and reconstruction error under varying SNR levels.}
\label{fig:Gaussian_MSE}
\end{figure}

\subsection{Graph topology inference results}

We examine the effectiveness of ECGL on a synthetic graph created using an Erd\H{o}s-R\`{e}nyi graph with $N=20$ nodes and a connection probability of $0.4$. The vertex-time graph signal is $(K,\omega_{0})$ bandlimited, and we construct it as $y = \sum_{i=1}^{K} \bu_{i}\otimes f$, where $\bu_{i}$ is the $i$-th eigenvector of $\bL$ and $f=\sin(\omega_{0}t)/(\omega_{0}t)$. We set $K=3$ and $\omega_{0}=10$ and evaluate the signal over $M=1000$ time instances. We then compare ECGL with two existing graph learning methods, Total Variation Graph Learning (TV-GL) \cite{stefania2019} and Estimated-Signal-Aided Graph Learning (ESA-GL) \cite{stefania2019}, using the performance metrics in \cref{gl_rho}, \cref{gl_adj}, and \cref{gl_pre_recall}. The results, summarized in \cref{tab:gl_results}, indicate that ECGL achieves the highest correlation coefficient with stronger alignment between the learned and true Laplacians. It also attains the lowest average recovery error, reflecting more accurate adjacency matrix estimation. Additionally, ECGL outperforms both TV-GL and ESA-GL in Precision and Recall, with a particularly notable improvement in Recall, which demonstrates its enhanced capability to recover true edges in the graph.

\begin{table}[!htb]
\caption{Performance comparison on graph topology reference. $\uparrow$ and $\downarrow$ indicate higher and lower values are better, respectively.}
\centering
\vspace{-2mm}
\resizebox{0.6\columnwidth}{!}{%
\begin{tabular}{lccc}
\toprule
& \text{TV-GL} & \text{ESA-GL} & \text{ECGL}  \\
\midrule
$\rho(\bL,\widetilde{\bL}) \uparrow$    &  0.9274      &  0.9267       &  0.9288      \\ 
$ \bar{\varepsilon}_{F} \downarrow$     &  0.0370      &  0.0368       &  0.0363       \\
\text{Precision }$\uparrow$              &  0.4338      &  0.4370       &  0.4500       \\
\text{Recall }$\uparrow$                 &  0.7284      &  0.7284       &  0.7780      \\
\bottomrule
\end{tabular}}

\label{tab:gl_results}
\end{table}

\section{Conclusion}\label{sec.conclu}

This paper establishes an uncertainty principle for vertex-time graph signals, unifying classical time-frequency and graph uncertainty principles. By defining vertex-time and spectral-frequency spreads, we quantify signal localization and identify maximally concentrated signals for constructing a vertex-time dictionary. This dictionary improves signal reconstruction, particularly under missing or intermittent data. Additionally, we introduce a graph topology inference method based on the uncertainty principle. Numerical experiments on synthetic and real datasets demonstrate superior reconstruction accuracy, noise robustness, and graph learning performance compared to existing methods. Future work will explore extensions to dynamic and large-scale graphs, further broadening the applicability of the proposed framework.

\begin{appendices}
    \section{Proof of \cref{UP_GGSP}}\label[Appendix]{sec:proof_Th3_UPGGSP}

Before proving the theorem, we present several technical preliminary results adapted from \cite{Tsitsvero2016} and \cite{Slepian1961}. The proofs essentially follow the same procedure as in \cite{Pollak1961} for continuous-time signals and are repeated here for completeness.

\begin{DefinitionA}
The angle between two functions $f$ and $g$ belonging to a Hilbert space is defined as
\begin{align}\label{angle_defi}
\theta(f,g)=\cos^{-1}\frac{\Re\{\angles{f,g}\}}{\norm{f}\norm{g}} 
\end{align}
where $\angles{\cdot,\cdot}$ and $\norm{}$ are the Hilbert space inner product and norm, respectively.  
\end{DefinitionA}



\begin{LemmaA}
\label{Lemma_the_min}
For a given function $f\in \ima(\Pi_{\SF})$, 
\begin{align*}
\inf_{g\in \ima(\Pi_{\VT})} \theta(f,g) = \cos^{-1} \frac{\norm{\Pi_{\VT} f}}{\norm{f}}
\end{align*}
is achieved by $g=k \cdot \Pi_{\VT}f$ for any constant $k>0$.
\end{LemmaA}
\begin{proof}
For any $g \in \ima(\Pi_{\VT})$, we have
\begin{align*}
\begin{aligned}
\Re\{\angles{f,g}\} \leq \abs{\angles{f,g}}&= \abs{\angles{f-\Pi_{\VT} f+\Pi_{\VT} f, g}} \\
&= \abs{\angles{\Pi_{\VT} f, g}} \leq \norm{\Pi_{\VT} f}\norm{g}
\end{aligned}
\end{align*}
where the second equality holds due to the fact that $f-\Pi_{\VT} f$ is orthogonal to $g$, and third equality is due to the Cauchy–Schwarz inequality. We thus have
\begin{align}\label{costheta}
\cos \theta(f,g)= \frac{\Re\{\angles{f,g}\}}{\norm{f}\norm{g}} \leq \frac{\norm{\Pi_{\VT} f}\norm{g}}{\norm{f}\norm{g}}=\frac{\norm{\Pi_{\VT} f}}{\norm{f}}.
\end{align}
Since $\cos \theta$ decreases monotonically in $[0,\pi]$ and equality in \cref{costheta} is achieved by choosing $g=k \cdot \Pi_{\VT}f$ with $k>0$, it follows that 
\begin{align*}
\inf_{g \in \ima(\Pi_{\VT})} \theta(f,g) = \cos^{-1} \frac{\norm{\Pi_{\VT} f}}{\norm{f}}.
\end{align*}
\end{proof}

We define the minimum angle between $\ima(\Pi_{\SF})$ and $\ima(\Pi_{\VT})$ as
\begin{align}\label{thetamin}
\theta_{\min}=\inf_{f \in \ima(\Pi_{\SF}), g\in \ima(\Pi_{\VT})}\theta(f,g).
\end{align}

\begin{PropositionA}[\cite{Colub1973}]
\label{min_angle}
The minimum angle $\theta_{\min}$ between $\ima(\Pi_{\SF})$ and $\ima(\Pi_{\VT})$ is given by 
\begin{align*}
\theta_{\min} = \cos^{-1}\sqrt{\lambda_{\max}\left(\Pi_{\SF}\Pi_{\VT}\Pi_{\SF}\right)},
\end{align*}
and is achieved by choosing $f=\psi_{0}$ and $g=\Pi_{\VT}\psi_{0}$ in \cref{thetamin}, where $\psi_{0}$ is an eigenvector of the Hermitian operator $\Pi_{\SF}\Pi_{\VT}\Pi_{\SF}$ corresponding to the eigenvalue $\lambda_{\max}(\Pi_{\SF}\Pi_{\VT}\Pi_{\SF})$.
\end{PropositionA}
\begin{proof}
From \cref{Lemma_the_min}, we have 
\begin{align*}
\begin{aligned}
&\inf_{f\in \ima(\Pi_{\SF}),~g\in \ima(\Pi_{\VT})}\theta(f,g)\\
&\qquad\qquad=\inf_{f\in \ima(\Pi_{\SF})}\cos^{-1}\frac{\norm{\Pi_{\VT} f}}{\norm{f}}\\
&\qquad\qquad= \cos^{-1}\sup_{f\in \ima(\Pi_{\SF})} \frac{\norm{\Pi_{\VT} f}}{\norm{f}},
\end{aligned}
\end{align*}
since $\cos(\cdot)$ decreases monotonically in $[0,\pi]$. For any $f\in \ima(\Pi_{\SF})$, $\Pi_{\SF} f=f$, and we have  
\begin{align}\label{pre_max}
\begin{aligned}
\frac{\norm{\Pi_{\VT} f}^{2}}{\norm{f}^{2}}&=\frac{\norm{\Pi_{\VT}\Pi_{\SF}f}^{2}}{\norm{f}^{2}} = \frac{\langle \Pi_{\VT}\Pi_{\SF} f, \Pi_{\VT}\Pi_{\SF} f\rangle}{\langle f,f\rangle} \\
&= \frac{\langle f, \Pi_{\SF}^{\ast}\Pi_{\VT}^{\ast}\Pi_{\VT}\Pi_{\SF} f\rangle}{\langle f,f\rangle}.
\end{aligned}
\end{align}
Since $\Pi_{\SF}$ and $\Pi_{\VT}$ are both Hermitian and idempotent operators, \cref{pre_max} yields
\begin{align*}
\frac{\norm{\Pi_{\VT} f}^{2}}{\norm{f}^{2}}= \frac{\angles{f, \Pi_{\SF}\Pi_{\VT}\Pi_{\SF} f}}{\angles{f,f}}.
\end{align*}
From the Rayleigh-Ritz theorem \cite{Horn1985}, we obtain
\begin{align}
\label{min_angle_RR_theorem}
\begin{aligned}
\sup_{f}\frac{\norm{\Pi_{\VT} f}^{2}}{\norm{f}^{2}}&=\sup_{f}\frac{\angles{f, \Pi_{\SF}\Pi_{\VT}\Pi_{\SF} f}}{\angles{f,f}} \\
&= \lambda_{\max}(\Pi_{\SF}\Pi_{\VT}\Pi_{\SF}),
\end{aligned}
\end{align}
and the proposition is proved. 
\end{proof}

We are now ready to prove \cref{UP_GGSP}.
Without loss of generality, we derive which values of $\beta_{\SF}$ are attainable for every choice of $\alpha_{\VT}$, assuming a unit-norm signal $f$. We consider two cases $\alpha_{\VT}=1$ and $\alpha_{\VT}\in (0,1)$. 

The case $\alpha_{\VT}=1$ means that the signal $f$ is supported only on the closure of $\calH_{\VT}$. From \cref{min_angle_RR_theorem} (by interchanging $\Pi_{\SF}$ and $\Pi_{\VT}$) and \cref{min_angle}, we obtain
\begin{align}\label{eq.sup_beta}
\sup_{f\in \ima(\Pi_{\VT}), \norm{f}=1} \beta_{\SF}^2 = \lambda_{\max}(\Pi_{\VT}\Pi_{\SF}\Pi_{\VT}).
\end{align}
Similarly, 
\begin{align}\label{eq.inf_beta}
\inf_{f\in \ima(\Pi_{\VT}), \norm{f}=1} \beta_{\SF}^2= 1-\lambda_{\max}(\Pi_{\VT}\overline{\Pi}_{\SF}\Pi_{\VT})
\end{align}
which is equivalent to the case of maximal concentration on the orthogonal complement subspace $\calH_{\SF}^{\perp}$. Thus, in the case of $\alpha_{\VT}=1$, $\beta_{\SF}^2$ lies in the interval $[1-\lambda_{\max}(\Pi_{\VT}\overline{\Pi}_{\SF}\Pi_{\VT}),\lambda_{\max}(\Pi_{\VT}\Pi_{\SF}\Pi_{\VT})]$.

Next, suppose $\alpha_{\VT} \in (0,1)$. We can decompose any signal $f$ as
\begin{align}\label{sig_decom}
f = \lambda \Pi_{\VT} f+ \gamma \Pi_{\SF}f+g
\end{align}
where $g$ is a signal orthogonal to both $\ima(\Pi_{\VT})$ and $\ima(\Pi_{\SF})$. Our goal is to find the nearest signal to $f$ in the space spanned by $\Pi_{\VT} f$ and $\Pi_{\SF} f$. To do this, we firstly calculate the inner products of \cref{sig_decom} successively with $f$, $\Pi_{\VT}f$, $\Pi_{\SF} f$ and $g$, and arrive at the system of equations:
\begin{align}\label{in_pro_sys_equ}
\centering
\left\{\begin{aligned}
&1 = \lambda \alpha_{\VT}^2 + \gamma \beta_{\SF}^2 + \angles{g,f},\\
&\alpha_{\VT}^2 = \lambda \alpha_{\VT}^2 + \gamma \angles{\Pi_{\SF} f, \Pi_{\VT} f},\\
&\beta_{\SF}^2 = \lambda \angles{\Pi_{\VT} f, \Pi_{\SF} f} + \gamma \beta_{\SF}^2,\\
&\angles{f,g} = \angles{g,g}.
\end{aligned}\right.
\end{align}
After eliminating $\angles{g,f}$, $\lambda$ and $\gamma$ from \cref{in_pro_sys_equ}, we obtain
\begin{align}\label{eliminating_results}
\begin{aligned}
&\beta_{\SF}^2-2\Re\{\angles{\Pi_{\VT} f,\Pi_{\SF} f}+\alpha_{\VT}^2 \\
&=\left( 1-\frac{\abs{\angles{\Pi_{\VT} f,\Pi_{\SF} f}}^2}{\alpha_{\VT}^2\beta_{\SF}^2}\right)
-\norm{g}^{2}\left( 1-\frac{\abs{\angles{\Pi_{\VT} f,\Pi_{\SF} f}}^2}{\alpha_{\VT}^2\beta_{\SF}^2}\right).
\end{aligned}
\end{align}

From \cref{min_angle}, the angle $\theta$ between $\Pi_{\VT} f\in \ima(\Pi_{\VT})$ and  $\Pi_{\SF} f \in \ima(\Pi_{\SF})$ satisfies
\begin{align}\label{angle_bound}
\theta \geq \cos^{-1}\sqrt{\lambda_{\max}\left(\Pi_{\SF}\Pi_{\VT}\Pi_{\SF}\right)}=\cos^{-1}\lambda_{\max}\left(\Pi_{\VT}\Pi_{\SF}\right).
\end{align}
Since $\norm{\Pi_{\VT} f}=\alpha_{\VT}$, $\norm{\Pi_{\SF} f}=\beta_{\SF}$, \cref{angle_defi} gives
\begin{align}
\begin{aligned}
\label{eq.bound_inequ}
\alpha_{\VT}\beta_{\SF} \cos\theta&= \Re\{\angles{\Pi_{\VT} f,\Pi_{\SF} f}\} \\
&\leq \abs{\angles{\Pi_{\VT} f,\Pi_{\SF} f}} \leq \alpha_{\VT} \beta_{\SF},
\end{aligned}
\end{align}
and we can further write
\begin{align}\label{angle_inequ}
0 \leq 1- \frac{\abs{\angles{\Pi_{\VT} f,\Pi_{\SF} f}}^2}{\alpha_{\VT}^2\beta_{\SF}^2} \leq 1-\cos^{2}\theta.
\end{align}

Completing the square on the left-hand side of \cref{eliminating_results}, and applying \cref{angle_inequ}, we obtain
\begin{align*}
\begin{aligned}
(\beta_{\SF}&-\alpha_{\VT}\cos\theta)^2\\
&= -\alpha_{\VT}^2\sin^2\theta+\left( 1-\frac{\abs{\angles{\Pi_{\VT} f,\Pi_{\SF} f}}^2}{\alpha_{\VT}^2\beta_{\SF}^2}\right)\\
&\qquad\qquad-\norm{g}^{2}\left( 1-\frac{\abs{\angles{\Pi_{\VT} f,\Pi_{\SF}f}}^2}{\alpha_{\VT}^2\beta_{\SF}^2}\right)\\
&\leq -\alpha_{\VT}^2\sin^2\theta+1-\cos^2\theta \\
& = (1-\alpha_{\VT}^2)\sin^2\theta = \sin^2 (\cos^{-1} \alpha_{\VT}) \sin^2\theta,
\end{aligned}
\end{align*}
where equality holds in the last inequality if and only if $\angles{\Pi_{\VT} f, \Pi_{\SF} f}$ is real and $g=0$ (a.e.). Therefore,
\begin{align*}
\beta_{\SF} \leq \cos(\theta-\cos^{-1}\alpha_{\VT}),
\end{align*}
from which it follows, by means of the bound \cref{angle_bound}, that
\begin{align*}
\beta_{\SF} \leq \cos(\cos^{-1}\sqrt{\lambda_{\max}\left(\Pi_{\SF}\Pi_{\VT}\Pi_{\SF}\right)}-\cos^{-1}\alpha_{\VT}).
\end{align*}
Thus,
\begin{align}\label{beta_range2}
\cos^{-1}\alpha_{\VT} + \cos^{-1}\beta_{\SF} \geq \cos^{-1}\sqrt{\lambda_{\max}(\Pi_{\SF}\Pi_{\VT}\Pi_{\SF})},
\end{align}
where equality is achieved by
\begin{align*}
f = p\psi_{0} + q \Pi_{\VT}\psi_{0}
\end{align*}
with 
\begin{align*}
p &= \sqrt{\frac{1-\alpha_{\VT}^2}{1-\lambda_{\max}(\Pi_{\SF}\Pi_{\VT}\Pi_{\SF})}}, \\
q &=\frac{\alpha_{\VT}}{\sqrt{\lambda_{\max}(\Pi_{\SF}\Pi_{\VT}\Pi_{\SF})}}-\sqrt{\frac{1-\alpha_{\VT}^2}{1-\lambda_{\max}(\Pi_{\SF}\Pi_{\VT}\Pi_{\SF})}}
\end{align*}
and $\psi_{0}$ is an eigenvector of the Hermitian operator $\Pi_{\SF}\Pi_{\VT}\Pi_{\SF}$ corresponding to the eigenvalue $\lambda_{\max}(\Pi_{\SF}\Pi_{\VT}\Pi_{\SF})$. 

Using $\norm{\overline{\Pi}_{\VT} f}= \sqrt{1-\alpha_{\VT}^2}$ and applying the same steps between \cref{sig_decom} and \cref{beta_range2} to the operators $\Pi_{\SF}\overline{\Pi}_{\VT}\Pi_{\SF}$, we obtain the inequality
\begin{align*}
\cos^{-1}\sqrt{1-\alpha_{\VT}^2} + \cos^{-1}\beta_{\SF} \geq \cos^{-1}\sqrt{\lambda_{\max}(\Pi_{\SF}\overline{\Pi}_{\VT}\Pi_{\SF})}.
\end{align*}
Similarly, with $\norm{\overline{\Pi}_{\SF} f}=\sqrt{1-\beta_{\SF}^2}$, we have
\begin{align*}
\cos^{-1}\alpha_{\VT} + \cos^{-1}\sqrt{1-\beta_{\SF}^2} \geq \cos^{-1}\sqrt{\lambda_{\max}(\overline{\Pi}_{\SF}\Pi_{\VT}\overline{\Pi}_{\SF})}
\end{align*}
and
\begin{align*}
\begin{aligned}
\cos^{-1}\sqrt{1-\alpha_{\VT}^2} &+ \cos^{-1}\sqrt{1-\beta_{\SF}^2} \\
&\geq \cos^{-1}\sqrt{\lambda_{\max}(\overline{\Pi}_{\SF}\overline{\Pi}_{\VT}\overline{\Pi}_{\SF})}.
\end{aligned}
\end{align*}

We have now verified the four inequalities in \cref{eq.Theorem_feasible region}. For $\beta_{\SF}=1$, $\alpha_{\VT}^2$ lies in the interval $[1-\lambda_{\max}(\Pi_{\SF}\overline{\Pi}_{\VT}\Pi_{\SF}),\lambda_{\max}(\Pi_{\SF}\Pi_{\VT}\Pi_{\SF})]$,
achieved by the eigenvectors of $\Pi_{\SF}\Pi_{\VT}\Pi_{\SF}$ that belong to $\Pi_{\SF}$ and their linear combinations. 
Proceeding similarly, one can show that all the values $\alpha_{\VT}$ and $\beta_{\SF}$ belonging to the border of $\Theta$ are achievable.  All the points inside $\Theta$ are achievable different combinations of the left and right singular vectors of $\Pi_{\SF}\Pi_{\VT}\Pi_{\SF}$,  $\Pi_{\SF}\overline{\Pi}_{\VT}\Pi_{\SF}$, $\overline{\Pi}_{\SF}\Pi_{\VT}\overline{\Pi}_{\SF}$ and $\overline{\Pi}_{\SF}\overline{\Pi}_{\VT}\overline{\Pi}_{\SF}$. This concludes the proof of \cref{UP_GGSP}.

\section{Proof of \cref{Lemma.joint_bounds}}\label[Appendix]{sec.Lemma.joint_bounds}

Using the fact that $\norm{f}^2_{L^{2}(\calV \times \calT)}=\sum_{v\in\calV}\norm{f(v,\cdot)}_{L^{2}(\calT)}^{2}=\int_{\calT}\norm{f(\cdot, t)}_{L^{2}(\calV)}^{2}\ud t$, we have
\begin{align}
\alpha_{\calS}^2
&= \frac{\norm{f}^2_{L^{2}(\calS)}}{\norm{f}^2_{L^{2}(\calV \times \calT)}}
= \frac{\sum_{v\in\calV'}\int_{\calT'}\abs{f(v,t)}^2 \ud t}{\sum_{v\in\calV}\int_{\calT}\abs{f(v,t)}^2 \ud t}\nn
& = \frac{\int_{\calT'}\norm{f(\cdot, t)}_{L^{2}(\calV')}^{2} \ud t}{\int_{\calT'}\norm{f(\cdot, t)}_{L^{2}(\calV)}^{2}\ud t} \frac{\sum_{v\in\calV}\norm{f(v,\cdot)}_{L^{2}(\calT')}^{2}}{\sum_{v\in\calV}\norm{f(v,\cdot)}_{L^{2}(\calT)}^{2}}\label{alpha_S}
\end{align}
and
\begin{align}
\beta_{\Sigma}^2
&=\frac{\norm{\calU f}_{L^{2}(\Sigma)}^{2}}{\norm{\calU f}^{2}_{L^{2}(\Lambda\times \Omega)}}
= \frac{\sum_{\lambda\in\Lambda'}\int_{\Omega'}\abs{\calU f(\lambda,\omega)}^2 \ud\omega}{\sum_{\lambda\in\Lambda}\int_{\Omega}\abs{\calU f(\lambda,\omega)}^2 \ud \omega} \nn
&= \frac{\int_{\Omega'}\norm{\calU f(\cdot,\omega)}_{L^{2}(\Lambda')}^{2} \ud\omega}{\int_{\Omega'}\norm{\calU f(\cdot,\omega)}_{L^{2}(\Lambda)}^{2}\ud\omega} \frac{\sum_{\lambda\in\Lambda}\norm{\calU f(\lambda,\cdot)}_{L^{2}(\Omega')}^{2}}{\sum_{\lambda\in\Lambda}\norm{\calU f(\lambda,\cdot)}_{L^{2}(\Omega)}^{2}}.\label{beta_sigma}
\end{align}
Since
\begin{align*}
\min_{v}\alpha^{2}_{\calT'}(v) \leq \frac{\sum_{v\in\calV}\norm{f(v,\cdot)}_{L^{2}(\calT')}^{2}}{\sum_{v\in\calV}\norm{f(v,\cdot)}_{L^{2}(\calT)}^{2}}\leq \max_{v}\alpha^{2}_{\calT'}(v)
\end{align*}
and
\begin{align*}
\inf_t \alpha^2_{\calV'}(t) \leq \frac{\int_{\calT'}\norm{f(\cdot, t)}_{\ell^{2}(\calV')}^{2}\ud t}{\int_{\calT'}\norm{f(\cdot, t)}_{\ell^{2}(\calV)}^{2}\ud t} \leq \sup_t \alpha^2_{\calV'}(t),
\end{align*}
the vertex-time spread $\alpha^2_{\calS}$ in \cref{alpha_S} is bounded by
\begin{align*}
\min_{v}\alpha^{2}_{\calT'}(v) \inf_t\alpha^2_{\calV'}(t)\leq \alpha^2_{\calS} \leq \max_{v}\alpha^{2}_{\calT'}(v) \sup_t\alpha^2_{\calV'}(t).
\end{align*} 
A similar argument bounds the spectral-frequency spread $\beta^2_{\Sigma}$ in \cref{beta_sigma} by
\begin{align*}
\min_{\lambda} \beta^{2}_{\Omega'}(\lambda)\inf_{\omega}\beta_{\Lambda'}^{2}(\omega)\leq \beta^{2}_{\Sigma}\leq  \max_{\lambda} \beta^{2}_{\Omega'}(\lambda)\sup_{\omega}\beta_{\Lambda'}^{2}(\omega),
\end{align*}
and the proof is complete.

\section{Proof of \cref{Theo.Feasible_Regoin_alpha}}\label[Appendix]{sec:proof_Th2_alpha}
Consider such two subsets $\calS=\calV'\times \calT'$ and $\Sigma = \Lambda'\times \Omega'$. In this case, the vertex-time limiting operator $\Pi_{\calS}$ and spectral-frequency limiting operator $\Pi_{\Sigma}$ could be written as $\Pi_{\calS}=\Pi_{\calV'}\otimes\Pi_{\calT'}$ and $\Pi_{\Sigma}=\Pi_{\Lambda'}\otimes\Pi_{\Omega'}$, respectively.
Based on the range of the spectral-frequency spread $\beta^2_{\Sigma}$ illustrated in \cref{eq.beta_sigma_range} as well as the monotonicity of $\cos^{-1}$, then the supremum of $\alpha_{\calS}$ achieves at
\begin{align*}
\begin{aligned}
\alpha_{\calS} \leq \cos&\Big(\cos^{-1}\sqrt{\lambda_{\max}\left(\Pi_{\Lambda'}\Pi_{\calV'}\Pi_{\Lambda'}\right)\lambda_{\max}\left(\Pi_{\Omega'}\Pi_{\calT'}\Pi_{\Omega'}\right)}\\
&\qquad\qquad-\cos^{-1}\left(\beta_{\Lambda'}^{\min}\beta_{\Omega'}^{\min}\right)\Big) 
\end{aligned}
\end{align*}
along with the property of tensor product $(\Pi_{\calV'}\otimes \Pi_{\calT'})(\Pi_{\Lambda'}\otimes \Pi_{\Omega'})(\Pi_{\calV'}\otimes \Pi_{\calT'})= (\Pi_{\calV'}\Pi_{\Lambda'}\Pi_{\calV'})\otimes (\Pi_{\calT'}\Pi_{\Omega'}\Pi_{\calT'})$ and $\lambda_{\max}((\Pi_{\calV'}\Pi_{\Lambda'}\Pi_{\calV'})\otimes (\Pi_{\calT'}\Pi_{\Omega'}\Pi_{\calT'}))=\lambda_{\max}(\Pi_{\calV'}\Pi_{\Lambda'}\Pi_{\calV'})\lambda_{\max}(\Pi_{\calT'}\Pi_{\Omega'}\Pi_{\calT'})$. Consider the fact of $\norm{\overline{\Pi}_{\Sigma}f}=\sqrt{1-\beta^2_{\Sigma}}$ with the range of $\beta^2_{\Sigma}$ shown in \cref{eq.beta_sigma_range} and apply the operator $\overline{\Pi}_{\Sigma}\Pi_{\calS}\overline{\Pi}_{\Sigma}$, then the supremum of $\alpha_{\calS}$ comes to
\begin{align*}
\alpha_{\calS} &\leq \cos\Big(\cos^{-1}\sqrt{\lambda_{\max}\left(\overline{\Pi}_{\Lambda'}\Pi_{\calV'}\overline{\Pi}_{\Lambda'}\right)\lambda_{\max}\left(\overline{\Pi}_{\Omega'}\Pi_{\calT'}\overline{\Pi}_{\Omega'}\right)}\\
&\qquad\qquad\qquad-\cos^{-1}\Big(\sqrt{1-(\beta_{\Omega'}^{\max}\beta^{\max}_{\Lambda'})^2}\Big)\Big).
\end{align*}
Similarly, when we apply the operators $\Pi_{\Sigma}\overline{\Pi}_{\calS}\Pi_{\Sigma}$ and $\overline{\Pi}_{\Sigma}\overline{\Pi}_{\calS}\overline{\Pi}_{\Sigma}$, other two inequalities are arrived at
\begin{align*}
\begin{aligned}
\alpha_{\calS} \geq &\sin\Big(\cos^{-1}\sqrt{\lambda_{\max}\left(\Pi_{\Lambda'}\overline{\Pi}_{\calV'}\Pi_{\Lambda'}\right)\lambda_{\max}\left(\Pi_{\Omega'}\overline{\Pi}_{\calT'}\Pi_{\Omega'}\right)}\\
&\qquad\qquad-\cos^{-1}\left(\beta_{\Lambda'}^{\min} \beta_{\Omega'}^{\min}\right)\Big) 
\end{aligned}
\end{align*}
and
\begin{align*}
\begin{aligned}
\alpha_{\calS} \geq &\sin\Big(\cos^{-1}\sqrt{\lambda_{\max}\left(\overline{\Pi}_{\Lambda'}\overline{\Pi}_{\calV'}\overline{\Pi}_{\Lambda'}\right)\lambda_{\max}\left(\overline{\Pi}_{\Omega'}\overline{\Pi}_{\calT'}\overline{\Pi}_{\Omega'}\right)}\\
&\qquad\qquad-\cos^{-1}\Big(\sqrt{1-\left(\beta_{\Lambda'}^{\max}\beta^{\max}_{\Omega'}\right)^2}\Big)\Big).
\end{aligned}
\end{align*}
This completes the proof of \cref{Theo.Feasible_Regoin_alpha}.

\section{Proof of \cref{Theo:localized_both_domains}}\label[Appendix]{sec:proof_the3}

In the forward direction, by repeatedly applying \cref{eq.vertex_time_limited_signals} and \cref{eq.spec_fre_limited_signals}, we obtain
\begin{align*}
\Pi_{\SF}\Pi_{\VT}\Pi_{\SF} f = \Pi_{\SF}\Pi_{\VT} f = \Pi_{\VT}\Pi_{\SF} f = \Pi_{\SF} f = f.
\end{align*}
Conversely, without loss of generality, suppose $f$ is nonzero and
\begin{align}\label{eq.eigenvector_BDB}
\Pi_{\SF}\Pi_{\VT}\Pi_{\SF} f = f.
\end{align}
Multiplying by $\Pi_{\SF}$ and using $\left(\Pi_{\SF}\right)^2=\Pi_{\SF}$, we obtain 
\begin{align}\label{eq.eigenvector_BDB_Bf}
\Pi_{\SF}\Pi_{\VT}\Pi_{\SF} f = \Pi_{\SF} f.
\end{align}
By equating \cref{eq.eigenvector_BDB} and \cref{eq.eigenvector_BDB_Bf}, we have
\begin{align*}\
\Pi_{\SF} f =f
\end{align*}
indicating that $f$ is perfectly localized in the spectral-frequency domain. Applying the Rayleigh-Ritz theorem \cite{Horn1985} and the Hermitian property of $\Pi_{\SF}$ (i.e., $\left(\Pi_{\SF}\right)^{\ast}=\Pi_{\SF}$), we derive
\begin{align*}
1=\max_{f} \frac{f^{\ast}\Pi_{\SF}\Pi_{\VT}\Pi_{\SF} f}{f^{\ast}f} = \max_{f}\frac{f^{\ast}\Pi_{\VT} f}{f^{\ast}f},
\end{align*}
which confirms \cref{eq.vertex_time_limited_signals}, establishing perfect localization in the vertex-time domain. This completes the proof of \cref{Theo:localized_both_domains}.

\section{Proof of \cref{Theo.solu_optimization}} \label[Appendix]{sec:proof_theo_solution}
Substituting the joint bandlimiting constraint within the objective function in \cref{eq.optimization_problem}, we obtain
\begin{align}\label{eq.optimization_problem_v2}
\begin{aligned}
\xi_{i} = \argmax_{\xi_{i}}~& \norm{\Pi_{\VT}\Pi_{\Sigma}\xi_{i}}\\
\ST & \norm{\xi_{i}} = 1, ~\angles{\xi_{i},\xi_{j}}=0,~j\neq i. 
\end{aligned}
\end{align}
By applying the Rayleigh-Ritz theorem \cite{Horn1985}, the solutions to \cref{eq.optimization_problem_v2} are the eigenvectors of the operator $(\Pi_{\VT}\Pi_{\Sigma})^{\ast}\Pi_{\VT}\Pi_{\Sigma}=\Pi_{\Sigma}\Pi_{\VT}\Pi_{\Sigma}$, i.e., $\Pi_{\Sigma}\Pi_{\VT}\Pi_{\Sigma} \xi_{i} = \lambda_{i} \xi_{i}$, where $\lambda_{i}$ are eigenvalues corresponding to the eigenvectors $\xi_{i}$ \cite{Royden2017}. Given that $\Pi_{\Sigma}\xi_{i}=\xi_{i}$ and $\left(\Pi_{\Sigma}\right)^{\ast}=\Pi_{\Sigma}$, we have
\begin{align}
\begin{aligned}
    \lambda_{j}\delta_{ij} &= \angles{\xi_{i},\Pi_{\Sigma}\Pi_{\VT}\Pi_{\Sigma}\xi_{j}} \\
    &= \angles{\left(\Pi_{\Sigma}\right)^{\ast}\xi_{i},\Pi_{\VT}\Pi_{\Sigma}\xi_{j}} =  \angles{\xi_{i},\Pi_{\VT}\xi_{j}}.
\end{aligned}
\end{align}
This completes the proof of \cref{Theo.solu_optimization}.
\end{appendices}


\end{document}